\theoremstyle{definition}
\newtheorem{definition}{Definition}
\newtheorem{example}[definition]{Example}
\newtheorem{construction}{Construction}
\theoremstyle{plain}
\newtheorem{theorem}{Theorem}
\newtheorem{proposition}[definition]{Proposition}
\newtheorem{lemma}[definition]{Lemma}
\newtheorem{remark}[definition]{Remark}
\title{New constructions of MSRD codes}
\author{Umberto Mart{\'i}nez-Pe\~{n}as \thanks{umberto.martinez@uva.es}}
\affil{IMUVa-Mathematics Research Institute,\\University of Valladolid, Spain}
\date{}
\begin{document}

\maketitle

\begin{abstract}
In this work, we provide four methods for constructing new maximum sum-rank distance (MSRD) codes. The first method, a variant of cartesian products, allows faster decoding than known MSRD codes of the same parameters. The other three methods allow us to extend or modify existing MSRD codes in order to obtain new explicit MSRD codes for sets of matrix sizes (numbers of rows and columns in different blocks) that were not attainable by previous constructions. In this way, we show that MSRD codes exist (by giving explicit constructions) for new ranges of parameters, in particular with different numbers of rows and columns at different positions.

\textbf{Keywords:} Linearized Reed--Solomon codes, maximum sum-rank distance codes, rank metric, sum-rank metric.

\textbf{MSC:} 15B33, 94B05, 94B65.
\end{abstract}

\section{Introduction} \label{sec intro}

The sum-rank metric, defined in \cite{multishot} and implicitly considered earlier in \cite{spacetime-kumar}, has recently attracted considerable attention in Coding Theory due to its applications in reliable and secure multishot network coding \cite{secure-multishot, multishot}, PMDS codes for repair in distributed storage \cite{cai-MR, gopi-MR, universal-lrc}, rate-diversity optimal space-time codes \cite{spacetime-kumar, Mohannad-Journal}, and multilayer crisscross error correction \cite{multicover}, among others.

The size or dimension (when linear) of codes also satisfy a Singleton bound with respect to their minimum sum-rank distance \cite[Th. III.2]{alberto-fundamental}. Codes attining this bound are therefore optimal with respect to the size-distance tradeoff and are called maximum sum-rank distance (MSRD) codes. Linearized Reed--Solomon codes \cite{linearizedRS} are the first MSRD codes that can be decoded in polynomial time over a field of subexponential size in the code length \cite{secure-multishot}. Afterwards, a number of alternative MSRD codes have appeared in the literature \cite{alberto-fundamental, chen, linearizedRS, generalMSRD, doubly, twisted, neri-oneweight, santonastaso}, covering other ranges of parameters (different field sizes and/or matrix sizes).

In this work, we provide four methods for constructing new MSRD codes. The first method (Section \ref{sec cart products}) consists of a special arrangement of cartesian products of preexisting MSRD codes and allows faster decoding than known MSRD codes of the same parameters. The other three methods (Sections \ref{sec combining}, \ref{sec lattices} and \ref{sec systematic}) allow us to extend or modify existing MSRD codes in order to obtain new explicit MSRD codes for sets of matrix sizes (numbers of rows and columns in different blocks) that were not attainable by previous constructions. Furthermore, the constructions in Sections \ref{sec lattices} and \ref{sec systematic} admit different numbers of rows and columns at different positions. Not many explicit MSRD constructions with this feature were known before \cite{alberto-fundamental, chen}. In Section \ref{sec comparisons}, we compare the concrete examples of MSRD codes obtained in this work with the known MSRD codes from the literature. In particular, we show that the parameters of MSRD codes from the literature can all be attained by our constructions, whereas our constructions of MSRD codes attain new ranges of parameters (numbers of rows and columns).

\section{Preliminaries} \label{sec preliminaries}

In this preliminary section, we revisit the basic properties of codes in the sum-rank metric (Subsection \ref{subsec preliminaries sum-rank}) and some known constructions of MSRD codes (Subsection \ref{subsec preliminaries msrd}). For tutorials and surveys on the topic, we refer to \cite{sum-rank-chapter, fnt}. 

Let $ \mathbb{F}_q $ denote the finite field of size $ q $, denote by $ \mathbb{F}_q^{m \times n} $ the space of matrices of size $ m \times n $ over $ \mathbb{F}_q $, for positive integers $ m $ and $ n $, and set $ \mathbb{F}_q^n = \mathbb{F}_q^{1 \times n} $. We also denote $ \mathbb{N} = \{ 0,1,2,\ldots \} $, $ [n] = \{ 1,2, \ldots, n \} $ and $ [m,n] = \{ m, m+1, \ldots, n \} $ for positive integers $ m $ and $ n $ with $ m \leq n $. In the following, $ \langle \cdot \rangle_{\mathbb{F}_q} $ and $ \dim_{\mathbb{F}_q} $  denote linear span and dimension over $ \mathbb{F}_q $.

\subsection{The sum-rank metric} \label{subsec preliminaries sum-rank}

Fix positive integers $ \ell $, $ m_1 \geq m_2 \geq \ldots \geq m_\ell $ and $ n_i \leq m_i $, for $ i \in [\ell] $. We will consider the sum-rank metric in the space $ \prod_{i=1}^\ell \mathbb{F}_q^{m_i \times n_i} $, where we will call each factor $ \mathbb{F}_q^{m_i \times n_i} $ a rank block, thus $ \ell $ is the number of (rank) blocks. For $ C = (C_1,\ldots, C_\ell) \in \prod_{i=1}^\ell \mathbb{F}_q^{m_i \times n_i} $, we define its sum-rank weight as
$$ {\rm wt}(C) = \sum_{i=1}^\ell {\rm Rk}(C_i), $$
where $ {\rm Rk} $ denotes the rank function. The sum-rank metric is defined as $ {\rm d} (C,D) = {\rm wt}(C-D) $, for $ C, D \in \prod_{i=1}^\ell \mathbb{F}_q^{m_i \times n_i} $. For a code (i.e., a subset) $ \mathcal{C} \subseteq \prod_{i=1}^\ell \mathbb{F}_q^{m_i \times n_i} $, we define its minimum sum-rank distance as
$$ {\rm d}(\mathcal{C}) = \min \{ {\rm d}(C,D) : C,D \in \mathcal{C}, C \neq D \}. $$
For an $ \mathbb{F}_q $-linear code $ \mathcal{C} \subseteq \prod_{i=1}^\ell \mathbb{F}_q^{m_i \times n_i} $, its minimum sum-rank distance coincides with its minimum sum-rank weight, that is, $ {\rm d}(\mathcal{C}) = \min \{ {\rm wt}(C) : C \in \mathcal{C}, C \neq 0 \} $. 

Observe that, when $ \ell = 1 $, the sum-rank metric recovers the rank metric, and when $ m_1 = n_1 = \ldots = m_\ell = n_\ell = 1 $, the sum-rank metric recovers the Hamming metric.

As in the case of the Hamming metric, there exists a Singleton bound that relates the minimum sum-rank distance and the size of a code without involving the field size (except for taking logarithms or dimensions). For a code (linear or non-linear) $ \mathcal{C} \subseteq \prod_{i=1}^\ell \mathbb{F}_q^{m_i \times n_i} $ with $ |\mathcal{C}| \geq 2 $, let $ {\rm d}(\mathcal{C}) = \sum_{i=1}^{j-1} n_i + \delta + 1 $, where $ j \in [\ell] $ and $ 0 \leq \delta \leq n_j - 1 $. The Singleton bound for the sum-rank metric, proven in \cite[Th. III.2]{alberto-fundamental}, reads
\begin{equation}
\log_q |\mathcal{C}| \leq \sum_{i=j}^\ell m_in_i - m_j \delta.
\label{eq singleton bound}
\end{equation}
Notice that, if $ \mathcal{C} $ is $ \mathbb{F}_q $-linear, then $ \log_q |\mathcal{C}| = \dim_{\mathbb{F}_q}(\mathcal{C}) $. A code $ \mathcal{C} \subseteq \prod_{i=1}^\ell \mathbb{F}_q^{m_i \times n_i} $ is called a Maximum Sum-Rank Distance (MSRD) code if it meets the Singleton bound (\ref{eq singleton bound}). See Subsection \ref{subsec preliminaries msrd} for some known explicit constructions.

When $ m = m_1 = \ldots = m_\ell $, we may consider the space $ \mathbb{F}_{q^m}^n $, where $ n = n_1 + \cdots + n_\ell $, instead of $ \prod_{i=1}^\ell \mathbb{F}_q^{m_i \times n_i} $, due to the following. Given an ordered basis $ \boldsymbol\gamma = (\gamma_1, \ldots, \gamma_m) \in \mathbb{F}_{q^m}^m $ of $ \mathbb{F}_{q^m} $ over $ \mathbb{F}_q $, we define the $ \mathbb{F}_q $-linear vector space isomorphism $ M_{\boldsymbol\gamma}^r : \mathbb{F}_{q^m}^r \longrightarrow \mathbb{F}_q^{m \times r} $ given by
\begin{equation}
M_{\boldsymbol\gamma}^r (\mathbf{c}) = \left( \begin{array}{cccc}
c_{1,1} & c_{1,2} & \ldots & c_{1,r} \\
c_{2,1} & c_{2,2} & \ldots & c_{2,r} \\
\vdots & \vdots & \ddots & \vdots \\
c_{m,1} & c_{m,2} & \ldots & c_{m,r} 
\end{array} \right),
\label{eq def matrix representation}
\end{equation}
for $ \mathbf{c} = (c_1, \ldots,c_r ) \in \mathbb{F}_{q^m}^r $, where $ c_{i,j} \in \mathbb{F}_q $, for $ i \in [m] $ and $ j \in [r] $, are the unique scalars such that $ c_j = \sum_{i=1}^m \gamma_i c_{i,j} $, for $ j \in [r] $. Now, if we set $ \mathbf{n} = (n_1,\ldots, n_\ell) $, we may extend the previous map to another $ \mathbb{F}_q $-linear vector space isomorphism $ M_{\boldsymbol\gamma}^{\mathbf{n}} : \mathbb{F}_{q^m}^n \longrightarrow \prod_{i=1}^\ell \mathbb{F}_q^{m \times n_i} $ by 
\begin{equation}
M_{\boldsymbol\gamma}^{\mathbf{n}}(\mathbf{c}) = \left( M_{\boldsymbol\gamma}^{n_1} (\mathbf{c}_1), \ldots, M_{\boldsymbol\gamma}^{n_\ell} (\mathbf{c}_\ell) \right),
\label{eq def matrix repr tuples}
\end{equation}
for a vector $ \mathbf{c} = (\mathbf{c}_1, \ldots, \mathbf{c}_\ell) \in \mathbb{F}_{q^m}^n $, where $ \mathbf{c}_i \in \mathbb{F}_{q^m}^{n_i} $, for $ i \in [\ell] $. We may also define its sum-rank weight as
$$ {\rm wt}(\mathbf{c}) = {\rm wt}\left( M_{\boldsymbol\gamma}^{\mathbf{n}}(\mathbf{c}) \right) = \sum_{i=1}^\ell {\rm Rk}\left( M_{\boldsymbol\gamma}^{n_i} (\mathbf{c}_i) \right). $$
Therefore, we may define the sum-rank metric in $ \mathbb{F}_{q^m}^n $ simply as $ {\rm d}(\mathbf{c},\mathbf{d}) = {\rm wt}(\mathbf{c} - \mathbf{d}) $, for $ \mathbf{c}, \mathbf{d} \in \mathbb{F}_{q^m}^n $. The advantage of considering the sum-rank metric in $ \mathbb{F}_{q^m}^n $ is that we may consider $ \mathbb{F}_{q^m} $-linear codes in such an ambient space. Notice that most constructions of MSRD codes are $ \mathbb{F}_{q^m} $-linear codes in $ \mathbb{F}_{q^m}^n $ \cite{linearizedRS, generalMSRD, doubly, twisted}, see Subsection \ref{subsec preliminaries msrd}. However, in this manuscript we will construct $ \mathbb{F}_q $-linear MSRD codes where not all $ m_1 , \ldots, m_\ell $ are equal. Only a few constructions in this case are known \cite{alberto-fundamental, chen}.  

Observe that, when considering the sum-rank metric in $ \mathbb{F}_{q^m}^n $ as above, we need to specify the vector $ \mathbf{n} = (n_1, \ldots, n_\ell) $, which we call the sum-rank length partition. Otherwise, the map $ M^{\mathbf{n}}_{\boldsymbol\gamma} $ is not well defined.

\subsection{Some known MSRD codes} \label{subsec preliminaries msrd}

We now briefly describe the general $ \mathbb{F}_{q^m} $-linear MSRD codes in $ \mathbb{F}_{q^m}^n $ introduced in \cite{generalMSRD}. They generalize linearized Reed--Solomon codes \cite{linearizedRS}, which were the first $ \mathbb{F}_{q^m} $-linear MSRD codes whose field sizes $ q^m $ are subexponential in the code length $ n $. In general, the MSRD codes in \cite{generalMSRD} are the ones with the smallest finite-field sizes $ q^m $ for the given parameters known so far. Moreover, they have the longest block length $ \ell $ compared to $ q $ and the matrix sizes, among known MSRD codes. Constructions \ref{cons bases}, \ref{cons msrd} and \ref{cons msrd systematic} in this manuscript (Sections \ref{sec combining}, \ref{sec lattices} and \ref{sec systematic}, respectively) will allow us to extend the block length or modify the matrix sizes of such MSRD codes in non-trivial ways. 

Since we are looking for long MSRD codes and an MSRD code can easily be shortened \cite[Sec. 3.3]{gsrws}, we will consider the following codes with the longest lengths possible. Let $ \mu $ and $ r $ be positive integers, define $ \ell = \mu (q-1) $ and $ n = \ell r $, and consider the sum-rank length partition $ \mathbf{n} = (r, \ldots, r) $ ($ \ell $ times). For $ k \in [n] $, define the matrix in $ \mathbb{F}_{q^m}^{k \times n} $ given by
\begin{equation}
 M_k(\mathbf{a},\boldsymbol\beta) = \left( \begin{array}{lll|c|lll}
\beta_1 & \ldots & \beta_{\mu r} & \ldots & \beta_1 & \ldots & \beta_{\mu r} \\
 \beta_1^q a_1 & \ldots & \beta_{\mu r}^q a_1 & \ldots & \beta_1^q a_{q-1} & \ldots & \beta_{\mu r}^q a_{q-1} \\
\beta_1^{q^2} a_1^{\frac{q^2-1}{q-1}} & \ldots & \beta_{\mu r}^{q^2} a_1^{\frac{q^2-1}{q-1}} & \ldots &  \beta_1^{q^2} a_{q-1}^{\frac{q^2-1}{q-1}} & \ldots &  \beta_{\mu r}^{q^2} a_{q-1}^{\frac{q^2-1}{q-1}} \\
\vdots & \ddots & \vdots & \ddots & \vdots & \ddots & \vdots \\
 \beta_1^{q^{k-1}} a_1^{\frac{q^{k-1}-1}{q-1}} & \ldots &  \beta_{\mu r}^{q^{k-1}} a_1^{\frac{q^{k-1}-1}{q-1}} & \ldots &  \beta_1^{q^{k-1}} a_{q-1}^{\frac{q^{k-1}-1}{q-1}} & \ldots &  \beta_{\mu r}^{q^{k-1}} a_{q-1}^{\frac{q^{k-1}-1}{q-1}} \\
\end{array} \right) ,
\label{eq general msrd generator}
\end{equation}
where $ a_1, \ldots, a_{q-1} \in \mathbb{F}_{q^m}^* $ are such that $ N_{q^m,q}(a_i) \neq N_{q^m,q}(a_j) $ if $ i \neq j $ (where $ N_{q^m,q} (a) = a \cdot a^q \cdots a^{q^{m-1}} = a^{\frac{q^m-1}{q-1}} $, for $ a \in \mathbb{F}_{q^m} $, is the norm of $ \mathbb{F}_{q^m} $ over $ \mathbb{F}_q $), and where $ \beta_1, \ldots, \beta_{\mu r} \in \mathbb{F}_{q^m}^* $ are such that, if we set $ \mathcal{H}_i = \left\langle \beta_{(i-1)r+1}, \beta_{(i-1)r+2}, \ldots, \beta_{ir}  \right\rangle_{\mathbb{F}_q} \subseteq \mathbb{F}_{q^m} $, then
\begin{enumerate}
\item
$ \dim_{\mathbb{F}_q}(\mathcal{H}_i) = r $, and
\item
$ \mathcal{H}_i \cap \left( \sum_{j \in \Gamma} \mathcal{H}_j \right) = \{ 0 \} $, for any set $ \Gamma \subseteq [\mu] $, such that $ i \notin \Gamma $ and $ |\Gamma| \leq \min \{ k,\mu \} -1 $,
\end{enumerate}
for all $ i \in [\mu] $.

With these assumptions, the $ \mathbb{F}_{q^m} $-linear code $ \mathcal{C}_k(\mathbf{a},\boldsymbol\beta) = \{ \mathbf{x} M_k(\mathbf{a},\boldsymbol\beta) : \mathbf{x} \in \mathbb{F}_{q^m}^k \} \subseteq \mathbb{F}_{q^m}^n $ has dimension $ k $ (over $ \mathbb{F}_{q^m} $) and is MSRD by \cite[Th. 3.12]{generalMSRD}. We refer the reader to \cite[Sec. 4]{generalMSRD} for concrete examples of choices of $ a_1, \ldots, a_{q-1} $ and $ \beta_1, \ldots, \beta_{\mu r} $ (in particular for the longest values of $ r $ and $ \mu $, and thus of $ \ell $, given $ q $ and $ m $). Recall that, by \cite[Th. 5]{gsrws}, the dual code $ \mathcal{C}_k(\mathbf{a},\boldsymbol\beta)^\perp $ is also MSRD. However, generator matrices of such codes are not known in general.

Linearized Reed--Solomon codes \cite{linearizedRS} correspond to the above MSRD codes when $ \mu = 1 $, that is, $ \boldsymbol\beta = (\beta_1, \ldots, \beta_r) $ and the two conditions on $ \mathcal{H}_1 $ simply mean that $ \beta_1, \ldots, \beta_r $ are $ \mathbb{F}_q $-linearly independent.

\section{Construction 1: Cartesian products} \label{sec cart products}

In general, cartesian products of MSRD codes are not MSRD. However, we now present a particular case where they are indeed MSRD. The main interest in this construction is that, when the component codes are linearized Reed--Solomon codes, we will see that the resulting code admits decoding algorithms that are faster than those of other MSRD codes of the same parameters.

\begin{construction} \label{cons product}
Consider (linear or non-linear) codes $ \mathcal{C}_1, \ldots, \mathcal{C}_t \subseteq \prod_{i=1}^\ell \mathbb{F}_q^{m_i \times m_i} $, where $ m_1 \geq \ldots \geq m_\ell $. Consider their cartesian product arranged as follows:
\begin{equation}
\mathcal{C} = \left\lbrace \left( \left( \begin{array}{c}
C_{1,1} \\
\vdots \\
C_{t,1}
\end{array} \right), \ldots, \left( \begin{array}{c}
C_{1,\ell} \\
\vdots \\
C_{t,\ell}
\end{array} \right) \right) : (C_{k,1},\ldots, C_{k,\ell}) \in \mathcal{C}_k, k \in [t] \right\rbrace \subseteq \prod_{i=1}^\ell \mathbb{F}_q^{(t m_i) \times m_i}, 
\label{eq def cartesian prod}
\end{equation}
and consider the sum-rank metric in $ \prod_{i=1}^\ell \mathbb{F}_q^{(t m_i) \times m_i} $ by taking ranks in each block of matrices $ \mathbb{F}_q^{(t m_i) \times m_i} $, for $ i \in [\ell] $. Observe that this is different than simply considering $ \left( \prod_{i=1}^\ell \mathbb{F}_q^{m_i \times m_i} \right)^t $ with the rank blocks $ \mathbb{F}_q^{m_i \times m_i} $.
\end{construction}

As in the classical case, we have the following basic result. The proof is straightforward.

\begin{lemma}
If $ d_k = {\rm d}(\mathcal{C}_k) $, for $ k \in [t] $, then
$$ \log_q|\mathcal{C}| = \sum_{k=1}^t \log_q|\mathcal{C}_k| \quad \textrm{and} \quad {\rm d}(\mathcal{C}) = \min \{ d_1, \ldots, d_t \}. $$
\end{lemma}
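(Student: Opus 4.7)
The plan is to treat the size statement and the distance statement separately, both being direct consequences of the specific stacking arrangement in Construction~\ref{cons product}.

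For the size, I would observe that a codeword of $\mathcal{C}$ is determined, independently and without repetition, by the choice of a tuple $(C_{k,1},\ldots,C_{k,\ell}) \in \mathcal{C}_k$ for each $k \in [t]$, since these tuples are placed as disjoint horizontal strips inside the blocks $\mathbb{F}_q^{(tm_i)\times m_i}$ and can be recovered from the codeword by projecting onto the rows $[(k-1)m_i+1,\,k m_i]$ of the $i$-th block. Hence $|\mathcal{C}| = \prod_{k=1}^t |\mathcal{C}_k|$, and taking $\log_q$ gives the first identity.

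For the distance, the key observation is the two-sided inequality
\[
\max_{k\in[t]} {\rm Rk}(E_{k,i}) \ \leq\ {\rm Rk}\!\begin{pmatrix} E_{1,i} \\ \vdots \\ E_{t,i} \end{pmatrix} \ \leq\ \sum_{k=1}^t {\rm Rk}(E_{k,i}),
\]
valid for any matrices $E_{k,i} \in \mathbb{F}_q^{m_i \times m_i}$, which follows from the fact that the row space of the stacked matrix is the sum of the row spaces of the pieces. I would apply this to $E_{k,i} = C_{k,i} - D_{k,i}$ for two codewords $C,D \in \mathcal{C}$. For the upper bound on ${\rm d}(\mathcal{C})$, pick an index $k_0$ attaining $d_{k_0} = \min_k d_k$ and a pair of codewords in $\mathcal{C}_{k_0}$ achieving its minimum distance, while taking $C_{k,i} = D_{k,i}$ for all $k \neq k_0$; summing the ranks of the stacked differences (which vanish outside the $k_0$-strip) over $i \in [\ell]$ yields exactly $d_{k_0}$. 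For the lower bound, if $C \neq D$ then some $k_0$ satisfies $(C_{k_0,i})_i \neq (D_{k_0,i})_i$, and the left inequality above with $k = k_0$ yields, after summing over $i$,
\[
{\rm wt}(C - D) \ \geq\ \sum_{i=1}^\ell {\rm Rk}(C_{k_0,i} - D_{k_0,i}) \ \geq\ d_{k_0} \ \geq\ \min_k d_k.
\]

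I expect no serious obstacle here: the proof is essentially the classical cartesian-product argument adapted to the stacked arrangement. The only subtlety worth making explicit is why the arrangement in \eqref{eq def cartesian prod} still preserves the \emph{minimum} of the $d_k$ (and does not force the sum-rank distance to collapse further), which is precisely controlled by the lower inequality on the rank of a stacked matrix; that inequality is where the whole statement hinges.
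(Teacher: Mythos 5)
Your proof is correct and is exactly the ``straightforward'' argument the paper has in mind (the paper omits the proof entirely): the bijection with $\prod_k \mathcal{C}_k$ gives the size, and the two-sided rank inequality for stacked matrices gives both directions of the distance claim. No gaps.
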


In particular, we obtain MSRD codes in the following particular case.

\begin{theorem}
If $ \mathcal{C}_i $ is MSRD for $ i \in [t] $, $ |\mathcal{C}_1| = \ldots = |\mathcal{C}_t| $ and $ d = d_1 = \ldots = d_t $, then $ \mathcal{C} $ is MSRD. More precisely, $ {\rm d}(\mathcal{C}) = d = \sum_{i=1}^{j-1}m_i + \delta + 1 $, where $ j \in [\ell] $ and $ 0 \leq \delta \leq m_j - 1 $, and
$$ \log_q|\mathcal{C}| = t \left( \sum_{i=j}^\ell m_i^2 - m_j \delta \right). $$
\end{theorem}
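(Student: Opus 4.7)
The plan is to combine the preceding lemma with two applications of the Singleton bound (\ref{eq singleton bound}), once to each component code and once to $ \mathcal{C} $ itself. First I would invoke the lemma: since all $ d_k $ coincide with $ d $ and all $ |\mathcal{C}_k| $ coincide, we immediately obtain $ {\rm d}(\mathcal{C}) = \min_k d_k = d $ and $ \log_q|\mathcal{C}| = \sum_{k=1}^t \log_q|\mathcal{C}_k| = t \log_q|\mathcal{C}_1| $.

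Next, because $ \mathcal{C}_k \subseteq \prod_{i=1}^\ell \mathbb{F}_q^{m_i \times m_i} $ is MSRD for each $ k $, the Singleton bound is attained with equality, yielding $ \log_q|\mathcal{C}_k| = \sum_{i=j}^\ell m_i^2 - m_j \delta $ (note that $ n_i = m_i $ here, since the component blocks are square). Multiplying by $ t $ produces exactly the cardinality claimed in the statement.

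Finally, I would verify that this cardinality meets the Singleton bound for $ \mathcal{C} $ itself. The ambient space is $ \prod_{i=1}^\ell \mathbb{F}_q^{(tm_i) \times m_i} $, so the column widths $ m_i $ are unchanged from the component codes while the row counts scale to $ tm_i \geq m_i $. The key observation is that the pair $ (j,\delta) $ in the decomposition $ d = \sum_{i=1}^{j-1} n_i + \delta + 1 $ depends only on the column widths, hence is the same for $ \mathcal{C} $ as for the $ \mathcal{C}_k $'s. Therefore (\ref{eq singleton bound}) applied to $ \mathcal{C} $ reads
$$ \log_q|\mathcal{C}| \leq \sum_{i=j}^\ell (tm_i) m_i - (tm_j)\delta = t\left( \sum_{i=j}^\ell m_i^2 - m_j\delta \right), $$
which is attained with equality, so $ \mathcal{C} $ is MSRD with the stated parameters.

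There is no real obstacle: once the lemma is available, the argument reduces to comparing the two Singleton bounds, and the only substantive observation is that scaling the row dimensions by the common factor $ t $ scales the right-hand side of (\ref{eq singleton bound}) by the same factor, so MSRD-equality transfers from the components to the product.
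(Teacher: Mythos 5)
Your proposal is correct and follows essentially the same route as the paper: apply the lemma to get $ {\rm d}(\mathcal{C}) = d $ and $ \log_q|\mathcal{C}| = t\log_q|\mathcal{C}_1| $, use MSRD-ness of the components to evaluate $ \log_q|\mathcal{C}_1| $, and observe that the Singleton bound for $ \mathcal{C} $ in $ \prod_{i=1}^\ell \mathbb{F}_q^{(tm_i)\times m_i} $ is exactly $ t $ times that of a component. Your remark that the pair $ (j,\delta) $ is determined by the column widths alone, which are unchanged under the product, is the right justification for the step the paper leaves implicit.
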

\begin{proof}
Since $ \mathcal{C}_k $ is MSRD of distance $ d $, we have $ \log_q|\mathcal{C}_k| = \sum_{i=j}^\ell m_i^2 - m_j \delta $, for $ k \in [t] $, thus $ \log_q|\mathcal{C}| = t \left( \sum_{i=j}^\ell m_i^2 - m_j \delta \right) $, and we are done, since the Singleton bound in this case is
$$ \log_q|\mathcal{C}| \leq \sum_{i=j}^\ell (t m_i)m_i - (t m_j) \delta = t \left( \sum_{i=j}^\ell m_i^2 - m_j \delta \right) . $$
\end{proof}

Consider now $ \ell \in [q-1] $ and let $ \mathcal{D} \subseteq \mathbb{F}_{q^m}^{\ell m} $ be an $ \mathbb{F}_{q^m} $-linear linearized Reed--Solomon code \cite{linearizedRS} (see also Subsection \ref{subsec preliminaries msrd}) of minimum sum-rank distance $ d \in [\ell m] $. Set $ \mathcal{C}_1 = \ldots = \mathcal{C}_t = M_{\boldsymbol\gamma}^{\mathbf{m}} (\mathcal{D}) \in (\mathbb{F}_q^{m \times m})^\ell $, in the cartesian-product construction from (\ref{eq def cartesian prod}), for $ \mathbf{m} = (m,\ldots, m) $ and for an ordered basis $ \boldsymbol\gamma = (\gamma_1, \ldots, \gamma_m) $ of $ \mathbb{F}_{q^m} $ over $ \mathbb{F}_q $. Then the code 
$$ \mathcal{C} \subseteq (\mathbb{F}_q^{(t m) \times m})^\ell $$
from (\ref{eq def cartesian prod}) is $ \mathbb{F}_q $-linear and MSRD of minimum sum-rank distance $ d $. 

The only MSRD codes with such parameters and with a known efficient decoder are linearized Reed--Solomon codes $ \mathcal{C}^\prime \subseteq \mathbb{F}_{q^{t m}}^{\ell m} \cong (\mathbb{F}_q^{(t m) \times m})^\ell $ of minimum sum-rank distance $ d $. However, decoding $ \mathcal{C} $ is always more efficient than decoding $ \mathcal{C}^\prime $, since $ \mathcal{C} $ requires decoding $ t $ linearized Reed--Solomon codes over $ \mathbb{F}_{q^m} $, $ \mathcal{C}^\prime $ requires decoding one linearized Reed--Solomon code over $ \mathbb{F}_{q^{t m}} $, in both cases of code length $ \ell m $, and there are no algorithms for multiplication in $ \mathbb{F}_{q^{t m}} $ of linear complexity (or lower) in $ t $ over $ \mathbb{F}_{q^m} $. 

For instance, if we use the Welch-Berlekamp decoder from \cite{secure-multishot}, then decoding $ \mathcal{C}^\prime $ requires $ \mathcal{O}((\ell m)^2) $ operations in $ \mathbb{F}_{q^{t m}} $, while decoding $ \mathcal{C} $ requires $ \mathcal{O}(t (\ell m)^2) $ operations in $ \mathbb{F}_{q^m} $. Assume that one multiplication in $ \mathbb{F}_{q^{t m}} $ costs about $ \mathcal{O}(t^2) $ operations in $ \mathbb{F}_{q^m} $. Then decoding $ \mathcal{C}^\prime $ requires $ \mathcal{O}((t \ell m)^2) $ operations in $ \mathbb{F}_{q^m} $, while decoding $ \mathcal{C} $ requires $ \mathcal{O}(t (\ell m)^2) $ operations in $ \mathbb{F}_{q^m} $.

\section{Construction 2: Combining bases} \label{sec combining}

Now we provide a construction that combines two linear codes by ``glueing'' their bases. 

\begin{construction} \label{cons bases}
Let
$$ \mathcal{C}_1 \subseteq \prod_{i=1}^\ell \mathbb{F}_q^{m_i \times n_i} \quad \textrm{and} \quad \mathcal{C}_2 \subseteq \prod_{i=1}^t \mathbb{F}_q^{m_{\ell+i} \times n_{\ell+i}} $$
be $ \mathbb{F}_q $-linear codes of dimensions $ k_1 $ and $ k_2 $, respectively. Set also $ d_1 = {\rm d}(\mathcal{C}_1) $ and $ d_2 = {\rm d}(\mathcal{C}_2) $.

Let $ \{ B_{j,1},\ldots, B_{j,k_j} \} $ form a basis of $ \mathcal{C}_j $, for $ j = 1,2 $. Consider the $ \mathbb{F}_q $-linear code $ \mathcal{C} \subseteq \prod_{i=1}^{\ell + t} \mathbb{F}_q^{m_i \times n_i} $ with basis
$$ \{ (B_{1,1},B_{2,1}), \ldots, (B_{1,k},B_{2,k}) \}, $$
where $ k = \min \{ k_1, k_2 \} $. 
\end{construction}

The code $ \mathcal{C} $ satisfies the following result, whose proof is straightforward.

\begin{lemma} \label{lemma combination}
It holds that 
$$ \dim (\mathcal{C}) = \min \{ k_1, k_2 \} \quad \textrm{and} \quad {\rm d}(\mathcal{C}) = d_1 + d_2. $$
\end{lemma}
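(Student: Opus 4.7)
The plan is to handle the two claims separately, since the weight of a codeword in $\prod_{i=1}^{\ell+t} \mathbb{F}_q^{m_i \times n_i}$ decomposes additively as the sum of the weight on blocks $1,\ldots,\ell$ and the weight on blocks $\ell+1,\ldots,\ell+t$. For the dimension, I would take a putative linear dependence $\sum_{j=1}^{k} \lambda_j (B_{1,j}, B_{2,j}) = 0$ and project to the first $\ell$ blocks to obtain $\sum_{j=1}^{k} \lambda_j B_{1,j} = 0$. Since $\{B_{1,1},\ldots,B_{1,k_1}\}$ is a basis of $\mathcal{C}_1$ and $k \leq k_1$, the vectors $B_{1,1},\ldots,B_{1,k}$ are $\mathbb{F}_q$-linearly independent, forcing $\lambda_j = 0$ for all $j$. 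Hence the $k$ generators of $\mathcal{C}$ are linearly independent and $\dim(\mathcal{C}) = k = \min\{k_1,k_2\}$.

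For the lower bound ${\rm d}(\mathcal{C}) \geq d_1 + d_2$, I would take an arbitrary nonzero codeword, written uniquely as $C = (C_1, C_2)$ with $C_j = \sum_{i=1}^k \lambda_i B_{j,i}$ for $j = 1, 2$ and some coefficients $\lambda_i$ not all zero. By the same linear-independence argument applied in each component separately, both $C_1 \in \mathcal{C}_1$ and $C_2 \in \mathcal{C}_2$ are nonzero, so ${\rm wt}(C_1) \geq d_1$ and ${\rm wt}(C_2) \geq d_2$; the additivity of sum-rank weight across the two disjoint block ranges then gives ${\rm wt}(C) = {\rm wt}(C_1) + {\rm wt}(C_2) \geq d_1 + d_2$.

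For the matching upper bound, the plan is to exhibit a codeword attaining the weight $d_1 + d_2$. Since any minimum-weight codeword of $\mathcal{C}_j$ can be extended to a basis, we may choose the bases so that $B_{1,1}$ has weight $d_1$ and $B_{2,1}$ has weight $d_2$; then $(B_{1,1}, B_{2,1}) \in \mathcal{C}$ realizes weight $d_1 + d_2$, closing the equality. The main obstacle is exactly this last point: the concatenated code $\mathcal{C}$ genuinely depends on the chosen bases (without the minimum-weight normalization one can only guarantee ${\rm d}(\mathcal{C}) \geq d_1 + d_2$), so the proof must either take this basis choice as implicit in the construction or note that it is made without loss of generality, which is presumably why the author labels the argument as straightforward.
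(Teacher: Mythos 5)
Your argument is correct, and since the paper simply declares this lemma's proof ``straightforward'' and omits it, there is no written proof to compare against; your write-up supplies exactly what is missing. The dimension claim and the lower bound $ {\rm d}(\mathcal{C}) \geq d_1 + d_2 $ are handled the way one would expect: projection to the first $ \ell $ blocks (resp.\ the last $ t $ blocks) plus linear independence of the first $ k \leq k_j $ basis vectors of $ \mathcal{C}_j $, and additivity of sum-rank weight over disjoint block ranges. More importantly, the subtlety you flag at the end is genuine and worth making explicit: as stated, Construction \ref{cons bases} allows arbitrary bases, and for a bad pairing of bases the minimum distance of $ \mathcal{C} $ can strictly exceed $ d_1 + d_2 $ (e.g.\ pair a unique-up-to-scalar minimum-weight vector of $ \mathcal{C}_1 $ with a non-minimum-weight vector of $ \mathcal{C}_2 $), so the stated equality requires either your normalization --- choose $ B_{1,1} $ and $ B_{2,1} $ of weights $ d_1 $ and $ d_2 $, which is always possible since any nonzero codeword extends to a basis --- or it must be read as the inequality $ {\rm d}(\mathcal{C}) \geq d_1 + d_2 $. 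Note that in the only place the lemma is used (Theorem \ref{th bases}), the inequality suffices: there $ \dim(\mathcal{C}) $ already meets the Singleton bound (\ref{eq singleton bound}) for distance $ d_1 + d_2 $, so the distance cannot be larger and equality is forced for any choice of bases. Either repair is fine; your proposal is sound.
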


Now assume that $ m_1 \geq \ldots \geq m_{\ell +t} $ and $ n_i \leq m_i $ for $ i \in [\ell + t] $. Assume also that $ \mathcal{C}_1 $ and $ \mathcal{C}_2 $ are MSRD with 
$$ d_1 = \sum_{i=1}^\ell n_i \quad \textrm{and} \quad d_2 = \sum_{i=\ell+1}^{j-1} n_i + \delta +1 , $$
for $ j \in [\ell+1, \ell +t] $ and $ 0 \leq \delta \leq m_j - 1 $. In particular, $ k_1 = m_\ell $ by the Singleton bound (\ref{eq singleton bound}). Finally, assume also that $ m_\ell \geq k_2 $. In this case, we have the following.

\begin{theorem} \label{th bases}
With assumptions as in the above paragraph, the code $ \mathcal{C} $ is MSRD with 
$$ {\rm d}(\mathcal{C}) = \sum_{i=1}^{j-1} n_i + \delta + 1 \quad \textrm{and} \quad \dim(\mathcal{C}) = \sum_{i=j}^{\ell +t} m_in_i - m_j \delta . $$
\end{theorem}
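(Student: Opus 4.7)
The plan is to read off the dimension and minimum sum-rank distance of $\mathcal{C}$ directly from Lemma \ref{lemma combination}, then show that the pair (dimension, distance) I obtain saturates the Singleton bound (\ref{eq singleton bound}) in the ambient space $\prod_{i=1}^{\ell+t}\mathbb{F}_q^{m_i\times n_i}$.

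First, since $\mathcal{C}_1$ is MSRD with $d_1 = \sum_{i=1}^{\ell} n_i$, the Singleton bound (\ref{eq singleton bound}) applied to $\mathcal{C}_1$ (with $j=\ell$ and $\delta = n_\ell-1$) forces $k_1 = m_\ell$. Combined with the hypothesis $m_\ell \geq k_2$, this gives $\min\{k_1,k_2\}=k_2$, so Lemma \ref{lemma combination} yields $\dim(\mathcal{C}) = k_2$ and
\[
{\rm d}(\mathcal{C}) = d_1 + d_2 = \sum_{i=1}^{\ell} n_i + \sum_{i=\ell+1}^{j-1} n_i + \delta + 1 = \sum_{i=1}^{j-1} n_i + \delta + 1,
\]
which is the desired minimum sum-rank distance.

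Next I would compute $k_2$. After re-indexing the blocks of $\mathcal{C}_2$ by $i \mapsto i-\ell$, the distance $d_2 = \sum_{i=\ell+1}^{j-1} n_i + \delta + 1$ is written in the form $\sum_{i=1}^{j'-1} n_{\ell+i} + \delta + 1$ with $j' = j - \ell \in [t]$, so the Singleton bound (\ref{eq singleton bound}) applied to $\mathcal{C}_2$ reads
\[
k_2 \leq \sum_{i=j'}^{t} m_{\ell+i} n_{\ell+i} - m_{\ell+j'}\,\delta = \sum_{i=j}^{\ell+t} m_i n_i - m_j\,\delta,
\]
with equality because $\mathcal{C}_2$ is MSRD. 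Therefore $\dim(\mathcal{C}) = k_2 = \sum_{i=j}^{\ell+t} m_i n_i - m_j\,\delta$, which is exactly the right-hand side of the Singleton bound (\ref{eq singleton bound}) for a code in $\prod_{i=1}^{\ell+t}\mathbb{F}_q^{m_i \times n_i}$ of sum-rank distance $\sum_{i=1}^{j-1}n_i + \delta +1$. Hence $\mathcal{C}$ is MSRD.

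No real obstacle arises once Lemma \ref{lemma combination} is invoked: the content of the theorem is essentially the bookkeeping observation that adding the MSRD codes' distances corresponds to adding the complementary ``Singleton budgets,'' so that MSRD-ness is preserved provided the dimensions match up, which is guaranteed by the hypothesis $m_\ell \geq k_2$. The only minor point to verify carefully is that the re-indexing of blocks when moving between the ambient spaces of $\mathcal{C}_2$ and of $\mathcal{C}$ produces matching $(j,\delta)$-parameters in the two Singleton bounds, which it does by construction.
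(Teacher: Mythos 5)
Your proposal is correct and follows the same route as the paper, which simply declares the result ``trivial from Lemma \ref{lemma combination} and the parameters of $\mathcal{C}_1$ and $\mathcal{C}_2$''; you have just written out the bookkeeping (computing $k_1=m_\ell$ and $k_2$ from the two Singleton bounds, applying the lemma, and matching the result against the Singleton bound in the combined ambient space) that the paper leaves implicit.
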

\begin{proof}
Trivial from Lemma \ref{lemma combination} and the parameters of $ \mathcal{C}_1 $ and $ \mathcal{C}_2 $.
\end{proof}

Observe that the main parameter restrictions are 
$$ {\rm d}(\mathcal{C}) > \sum_{i=1}^\ell n_i \quad \textrm{and} \quad m_\ell \geq \sum_{i=j}^{\ell +t} m_in_i - m_j \delta . $$
We also note that Construction \ref{cons bases} can be iterated any given number of times.

In Section \ref{sec comparisons}, we will show how Construction \ref{cons bases} generalizes constructions from the literature.

\section{Construction 3: Using lattices of MSRD codes} \label{sec lattices}

In this section, we provide a construction of $ \mathbb{F}_q $-linear MSRD codes based on lattices of (shorter) MSRD codes. We describe the general construction in Subsection \ref{subsec construction 3} and provide concrete examples in Subsection \ref{subsec examples 3}.

\subsection{The general construction} \label{subsec construction 3}

Consider the parameters $ m_1 \geq \ldots \geq m_\ell $ and $ n_i \leq m_i $ for $ i \in [\ell] $. We further assume that $ m = m_s = m_{s+1} = \ldots = m_\ell $, for some $ s \in [\ell] $. Set $ n = n_1 + \cdots + n_\ell $ and let $ d \in [n] $ be such that 
\begin{equation}
\label{eq condition d-t}
d - t \geq \sum_{i=1}^{s-1} n_i + 1, 
\end{equation}
for some positive integer $ t $. Consider an $ \mathbb{F}_q $-linear MSRD code $ \mathcal{C}_\varnothing \subseteq \prod_{i=1}^\ell \mathbb{F}_q^{m_i \times n_i} $ of distance $ {\rm d}(\mathcal{C}_\varnothing) = d $, let $ \{ B_{u,v} \}_{u=1,v=1}^{t,m} \subseteq \prod_{i=1}^\ell \mathbb{F}_q^{m_i \times n_i} $ be a set of $ \mathbb{F}_q $-linearly independent tuples such that $ \mathcal{C}_\varnothing \cap \langle B_{i,j} : i \in [t], j \in [m] \rangle_{\mathbb{F}_q} = 0 $, and define the $ \mathbb{F}_q $-linear code
\begin{equation}
\label{eq def code C_I}
\mathcal{C}_I = \mathcal{C}_\varnothing \oplus \langle B_{i,j} : i \in I, j \in [m] \rangle_{\mathbb{F}_q} ,
\end{equation}
for $ I \subseteq [t] $. Observe that this imposes the restriction $ tm + \dim_{\mathbb{F}_q}(\mathcal{C}_\varnothing) \leq \sum_{i=1}^\ell m_in_i $. Given $ I \subseteq [t] $, we have by definition that 
$$ \dim(\mathcal{C}_I) = \dim(\mathcal{C}_\varnothing) + m|I| = m (n - d + 1 + |I|). $$
We will further assume that $ {\rm d}(\mathcal{C}_I) = d - |I| $. This implies that $ \mathcal{C}_I $ is MSRD due to the Singleton bound (\ref{eq singleton bound}), since such a bound is $ m (n - d + 1 + |I|) $ in this case, since $ d-|I| \geq d-t \geq \sum_{i=1}^{s-1}n_i+1 $ by (\ref{eq condition d-t}), and $ m_s = \ldots = m_\ell = m $. Observe that the family $ \{ \mathcal{C}_I \}_{I \subseteq [t]} $ forms a lattice of MSRD codes isomorphic to the lattice of subsets of $ [t] $ by the map $ I \mapsto \mathcal{C}_I $. 

We now proceed to obtain a new $ \mathbb{F}_q $-linear MSRD code of distance $ d $ but longer than $ \mathcal{C}_\varnothing $. To that end, we consider additional lengths $ m_{\ell+1} , \ldots, m_{\ell + \ell_t}, n_{\ell+1} , \ldots, n_{\ell + \ell_t} $, for integers $ 0 = \ell_0 < \ell_1 < \ell_2 < \ldots < \ell_t $ such that
\begin{equation}
m_{\ell + \ell_{i-1} + 1} n_{\ell + \ell_{i-1} + 1} + \cdots + m_{\ell + \ell_i} n_{\ell + \ell_i} \leq m, 
\label{eq condition m bigger than squares}
\end{equation}
for $ i \in [t] $. Consider now $ \mathbb{F}_q $-linear subspaces $ \mathcal{V}_j \subseteq \mathbb{F}_q^m $ such that $ \dim (\mathcal{V}_j) = m_{\ell + j} n_{\ell + j} $, for $ j \in [\ell_t] $, and such that
$$ \mathcal{V}_{\ell_{i-1} + 1} , \mathcal{V}_{\ell_{i-1} + 2} , \ldots , \mathcal{V}_{\ell_i} $$
form a direct sum inside $ \mathbb{F}_q^m $, for $ i \in [t] $. This is possible thanks to condition (\ref{eq condition m bigger than squares}). Finally, consider $ \mathbb{F}_q $-linear vector space isomorphisms 
$$ \varphi_j : \mathcal{V}_j \longrightarrow \mathbb{F}_q^{m_{\ell + j} \times n_{\ell + j}}, $$
for $ j \in [\ell_t] $.

The main construction of this section is as follows.

\begin{construction} \label{cons msrd}
We construct the $ \mathbb{F}_q $-linear code $ \mathcal{C} \subseteq \prod_{i=1}^{\ell + \ell_t} \mathbb{F}_q^{m_i \times n_i} $ as a direct sum of two subcodes $ \mathcal{C}_1 $ and $ \mathcal{C}_2 $. First, let $ \mathcal{C}_1 \subseteq \prod_{i=1}^{\ell + \ell_t} \mathbb{F}_q^{m_i \times n_i} $ be equal to $ \mathcal{C}_\varnothing $ but adding zeros to each codeword in the $ i $th block for every $ i \in [\ell + 1, \ell + \ell_t] $. Second, let 
$$ \mathcal{C}_2 = \bigoplus_{i=1}^t \bigoplus_{j=\ell_{i-1}+1}^{\ell_i} \left\lbrace \left( \sum_{k=1}^m \alpha_k B_{i,k} , 0, \ldots, \underbrace{\varphi_j(\boldsymbol\alpha)}_{(\ell + j)\textrm{th block}} , \ldots , 0 \right) : \boldsymbol\alpha \in \mathcal{V}_j \right\rbrace \subseteq \prod_{i=1}^{\ell + \ell_t} \mathbb{F}_q^{m_i \times n_i} , $$
where we use the notation $ \boldsymbol\alpha = (\alpha_1, \ldots, \alpha_m) \in \mathbb{F}_q^m $. Finally, define $ \mathcal{C} = \mathcal{C}_1 \oplus \mathcal{C}_2 $.
\end{construction}

We next show that the code $ \mathcal{C} $ is an $ \mathbb{F}_q $-linear MSRD code of minimum distance $ d $.

\begin{theorem} \label{th extension is msrd}
The code $ \mathcal{C} $ from Construction \ref{cons msrd} is an $ \mathbb{F}_q $-linear MSRD code of minimum sum-rank distance $ {\rm d}(\mathcal{C}) = d $ and dimension $ \dim_{\mathbb{F}_q}(\mathcal{C}) = m (n-d+1) + \sum_{i=\ell+1}^{\ell + \ell_t} m_in_i $.
\end{theorem}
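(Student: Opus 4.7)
The plan is to establish two facts and then invoke the Singleton bound: (i) the sum $\mathcal{C}=\mathcal{C}_1\oplus\mathcal{C}_2$ is direct with dimension $m(n-d+1)+\sum_{i=\ell+1}^{\ell+\ell_t}m_in_i$, and (ii) every nonzero codeword of $\mathcal{C}$ has sum-rank weight at least $d$. Using $m_s=\cdots=m_\ell=m$ together with condition (\ref{eq condition d-t}), the Singleton bound (\ref{eq singleton bound}) evaluated at distance $d$ in the space $\prod_{i=1}^{\ell+\ell_t}\mathbb{F}_q^{m_i\times n_i}$ gives exactly $m(n-d+1)+\sum_{i=\ell+1}^{\ell+\ell_t}m_in_i$, so (i) and (ii) together force $\mathcal{C}$ to be MSRD with the stated distance and dimension.

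Directness is straightforward: if a codeword of $\mathcal{C}_2$ has zeros in its last $\ell_t$ blocks, then $\varphi_j(\boldsymbol\alpha^{(i,j)})=0$ for each $i,j$, so all $\boldsymbol\alpha^{(i,j)}$ vanish because the $\varphi_j$ are isomorphisms, and the codeword is zero. Since $\mathcal{C}_1$ vanishes on the last $\ell_t$ blocks, this gives $\mathcal{C}_1\cap\mathcal{C}_2=0$, and the dimensions add to $\dim_{\mathbb{F}_q}(\mathcal{C}_\varnothing)+\sum_{j=1}^{\ell_t}\dim_{\mathbb{F}_q}(\mathcal{V}_j)$, which equals the claimed value.

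For the weight bound, I would take a nonzero $c=c_1+c_2\in\mathcal{C}$, writing $c_1$ as the zero-extension of a unique $c_\varnothing\in\mathcal{C}_\varnothing$ and $c_2=\sum_{i=1}^t\sum_{j=\ell_{i-1}+1}^{\ell_i}\bigl(\sum_{k=1}^m\alpha_k^{(i,j)}B_{i,k},0,\ldots,\varphi_j(\boldsymbol\alpha^{(i,j)}),\ldots,0\bigr)$ for uniquely determined $\boldsymbol\alpha^{(i,j)}\in\mathcal{V}_j$. If all $\boldsymbol\alpha^{(i,j)}$ vanish, then $c_\varnothing\neq 0$ and the weight is at least $d$ by the MSRD hypothesis on $\mathcal{C}_\varnothing$. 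Otherwise, set $\boldsymbol\beta^{(i)}=\sum_{j=\ell_{i-1}+1}^{\ell_i}\boldsymbol\alpha^{(i,j)}$ and $J=\{i\in[t]:\boldsymbol\beta^{(i)}\neq 0\}$. The crucial input is the direct-sum condition on $\mathcal{V}_{\ell_{i-1}+1},\ldots,\mathcal{V}_{\ell_i}$: it ensures that $\boldsymbol\beta^{(i)}=0$ forces every summand $\boldsymbol\alpha^{(i,j)}$ with $j\in[\ell_{i-1}+1,\ell_i]$ to vanish, so $J$ must be non-empty. In the first $\ell$ blocks, $c$ agrees with $c_\varnothing+\sum_{i\in J}\sum_{k=1}^m\beta_k^{(i)}B_{i,k}\in\mathcal{C}_J$, which is nonzero thanks to the hypothesis $\mathcal{C}_\varnothing\cap\langle B_{i,j}\rangle_{\mathbb{F}_q}=0$ combined with the $\mathbb{F}_q$-linear independence of the $B_{i,k}$, hence has weight at least $d-|J|$ by the assumed MSRD property of $\mathcal{C}_J$. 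In the last $\ell_t$ blocks, each $i\in J$ produces a nonzero $\boldsymbol\alpha^{(i,j)}$ for some $j\in[\ell_{i-1}+1,\ell_i]$, contributing rank at least one in block $\ell+j$; since the intervals $[\ell_{i-1}+1,\ell_i]$ are disjoint, these contributions live in distinct blocks and sum to at least $|J|$. Adding gives weight at least $(d-|J|)+|J|=d$.

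The delicate step is this weight analysis in the nontrivial case. It simultaneously invokes the local direct-sum property of the $\mathcal{V}_j$'s within each group (to force $J\neq\varnothing$), the independence hypothesis between $\mathcal{C}_\varnothing$ and the $B_{i,k}$'s (to rule out cancellation in the first $\ell$ blocks), the MSRD property of the intermediate lattice codes $\mathcal{C}_J$ (to lower-bound the first-block weight by $d-|J|$), and finally the disjointness of the intervals $[\ell_{i-1}+1,\ell_i]$ (to lower-bound the last-block weight by $|J|$). These four ingredients are precisely tuned so that the \emph{loss} of $|J|$ from the first part cancels with the \emph{gain} of $|J|$ from the second, independently of which $J\subseteq[t]$ actually arises.
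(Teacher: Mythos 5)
Your proposal is correct and follows essentially the same route as the paper: the same decomposition $\mathcal{C}=\mathcal{C}_1\oplus\mathcal{C}_2$, the same split of the weight of a nonzero codeword into a first-$\ell$-blocks contribution of at least $d-|J|$ (via the lattice code $\mathcal{C}_J$, the independence of the $B_{i,k}$, and $\mathcal{C}_\varnothing\cap\langle B_{i,j}\rangle_{\mathbb{F}_q}=0$) and a last-blocks contribution of at least $|J|$, followed by the Singleton bound to force equality. The only cosmetic difference is that you index the relevant subset by the aggregated vectors $\boldsymbol\beta^{(i)}$ rather than by the existence of a nonzero $\boldsymbol\alpha_j$, which the local direct-sum property shows is the same set.
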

\begin{proof}
First, let
$$ \mathcal{D}_j = \left\lbrace \left( \sum_{k=1}^m \alpha_k B_{i,k} , 0, \ldots, \underbrace{\varphi_j(\boldsymbol\alpha)}_{(\ell + j)\textrm{th block}} , \ldots , 0 \right) : \boldsymbol\alpha \in \mathcal{V}_j \right\rbrace \subseteq \prod_{i=1}^{\ell + \ell_t} \mathbb{F}_q^{m_i \times n_i}, $$
for $ j \in [\ell_{i-1} + 1 , \ell_i] $ and $ i \in [t] $. Clearly $ \mathcal{D}_j $ is an $ \mathbb{F}_q $-linear subspace isomorphic to $ \mathcal{V}_j $ and thus of dimension $ m_{\ell + j} n_{\ell + j} $. Observe now that all the subspaces
$$ \mathcal{D}_1 , \mathcal{D}_2 , \ldots , \mathcal{D}_{\ell_t} $$
form a direct sum inside $ \prod_{i=1}^{\ell + \ell_t} \mathbb{F}_q^{m_i \times n_i} $, since a nonzero codeword in $ \mathcal{D}_j $ has a nonzero component in the $ (\ell + j) $th block for some $ j \in [\ell_{i-1}+1, \ell_i] $, for some $ i \in [t] $, and is identically zero in all the other rank blocks with indices in $ [\ell + 1, \ell + \ell_t] $. Therefore we indeed have that
$$ \mathcal{C}_2 = \bigoplus_{i=1}^t \bigoplus_{j=\ell_{i-1}+1}^{\ell_i} \mathcal{D}_j. $$
In particular, we have that
$$ \dim(\mathcal{C}_2) = \sum_{i=1}^t \sum_{j=\ell_{i-1}+1}^{\ell_i} \dim(\mathcal{D}_j) = \sum_{i=\ell+1}^{\ell + \ell_t} m_in_i. $$

Similarly, since every nonzero codeword in $ \mathcal{C}_2 $ contains a nonzero element in at least one of the blocks in the positions $ j \in [\ell+1,\ell+\ell_t] $ and $ \mathcal{C}_1 $ is identically zero in those positions, we also deduce that $ \mathcal{C}_1 \cap \mathcal{C}_2 = 0 $. In particular, it holds indeed that $ \mathcal{C} = \mathcal{C}_1 \oplus \mathcal{C}_2 $, and
$$ \dim(\mathcal{C}) = \dim(\mathcal{C}_1) + \dim(\mathcal{C}_2) = m (n-d+1) + \sum_{i=\ell+1}^{\ell + \ell_t} m_in_i. $$

Now we show that the minimum distance of $ \mathcal{C} $ is $ d $. A codeword in $ \mathcal{C} $ is of the form
$$ C = \left( D + \sum_{i=1}^t \sum_{j = \ell_{i-1}+1}^{\ell_i} \sum_{k=1}^m \alpha_{j,k} B_{i,k}, \varphi_1(\boldsymbol\alpha_1), \ldots, \varphi_{\ell_t}(\boldsymbol\alpha_{\ell_t}) \right), $$
where $ D \in \mathcal{C}_\varnothing $ and $ \boldsymbol\alpha_j = (\alpha_{j,1}, \ldots, \alpha_{j,m}) \in \mathcal{V}_j $, for $ j \in [\ell_t] $. Set
$$ I = \{ i \in [t] \mid \exists j \in [\ell_{i-1}+1,\ell_i] \textrm{ such that } \boldsymbol\alpha_j \neq \mathbf{0} \}. $$
Then we have
$$ C = \left( D + \sum_{i \in I} \sum_{j = \ell_{i-1}+1}^{\ell_i} \sum_{k=1}^m \alpha_{j,k} B_{i,k} , \varphi_1(\boldsymbol\alpha_1), \ldots, \varphi_{\ell_t}(\boldsymbol\alpha_{\ell_t}) \right). $$
On the first $ \ell $ blocks, we have the codeword
\begin{equation}
D + \sum_{i \in I} \sum_{j = \ell_{i-1}+1}^{\ell_i} \sum_{k=1}^m \alpha_{j,k} B_{i,k} \in \mathcal{C}_I.
\label{eq codeword}
\end{equation}
Given $ i \in I $, observe that $ \sum_{k=1}^m \left( \sum_{j = \ell_{i-1}+1}^{\ell_i} \alpha_{j,k} \right) B_{i,k} \neq 0 $, since $ B_{i,1},\ldots, B_{i,m} $ are $ \mathbb{F}_q $-linearly independent, $ \mathcal{V}_{\ell_{i-1}+1} , \ldots, \mathcal{V}_{\ell_i} $ form a direct sum inside $ \mathbb{F}_q^m $ and there is at least one $ j \in [\ell_{i-1}+1, \ell_i] $ such that $ \boldsymbol\alpha_j \neq \mathbf{0} $. In particular, $ \sum_{i \in I} \left( \sum_{j = \ell_{i-1}+1}^{\ell_i} \sum_{k=1}^m \alpha_{j,k} B_{i,k} \right) \neq 0 $ since $ \{ B_{i,j} \}_{i=1,j=1}^{t,m} $ are $ \mathbb{F}_q $-linearly independent. Combining this fact with $ \mathcal{C}_\varnothing \cap \langle B_{i,j} : i \in [t], j \in [m] \rangle_{\mathbb{F}_q} = 0 $, we conclude that the codeword in (\ref{eq codeword}) is zero if, and only if, $ D = 0 $ and $ I = \varnothing $, which is equivalent to $ C $ being zero. Hence if $ C $ is nonzero, then
$$ {\rm wt} \left( D + \sum_{i \in I} \sum_{j = \ell_{i-1}+1}^{\ell_i} \sum_{k=1}^m \alpha_{j,k} B_{i,k} \right) \geq {\rm d}(\mathcal{C}_I) = d - |I| . $$
Finally, since there is at least one $ j \in [\ell_{i-1}+1,\ell_i] $ such that $ \boldsymbol\alpha_j \neq \mathbf{0} $, for every $ i \in I $, then 
$$ {\rm wt}(\varphi_1(\boldsymbol\alpha_1), \ldots, \varphi_{\ell_t}(\boldsymbol\alpha_{\ell_t})) \geq |I|, $$
and we conclude that $ {\rm wt}(C) \geq d $ if $ C $ is nonzero. In other words, $ {\rm d}(\mathcal{C}) \geq d $, but equality must hold by the Singleton bound (\ref{eq singleton bound}), thus $ {\rm d}(\mathcal{C}) = d $ and we are done.
\end{proof}

\subsection{Concrete examples} \label{subsec examples 3}

Lattices of MSRD codes were studied in \cite{doubly} in order to extend the MSRD codes from \cite{generalMSRD}, i.e., those from Subsection \ref{subsec preliminaries msrd}. However, the extensions from \cite{doubly} only added blocks of matrices of size $ 1 \times m $. Using the technique from Subsection \ref{subsec construction 3}, we now give extensions of the MSRD codes from Subsection \ref{subsec preliminaries msrd} for new ranges of parameters, providing new constructions of MSRD codes. 

Consider $ m = m_1 = \ldots = m_\ell $ and $ r = n_1 = \ldots = n_\ell \leq m $ for $ i \in [\ell] $, and set $ n = \ell r $. Let $ k $ and $ t $ be positive integers such that $ t+k \leq n $ and let $ \mathbf{g}_1, \mathbf{g}_2, \ldots , \mathbf{g}_{t+k} \in \mathbb{F}_{q^m}^n $ be $ \mathbb{F}_{q^m} $-linearly independent. For $ I \subseteq [t] $, define the $ \mathbb{F}_{q^m} $-linear code $ \mathcal{D}_I = \langle \mathbf{g}_i : i \in I \rangle_{\mathbb{F}_{q^m}} \oplus \langle \mathbf{g}_{t+1}, \ldots, \mathbf{g}_{t+k} \rangle_{\mathbb{F}_{q^m}} \subseteq \mathbb{F}_{q^m}^n $, and assume that it is MSRD, that is, 
$$ \dim_{\mathbb{F}_{q^m}}(\mathcal{D}_I) = k + |I| \quad \textrm{and} \quad {\rm d}(\mathcal{D}_I) = n - k - |I| + 1. $$
If $ \boldsymbol\gamma = (\gamma_1, \ldots, \gamma_m) $ forms an ordered basis of $ \mathbb{F}_{q^m} $ over $ \mathbb{F}_q $ and we define $ \mathcal{C}_I = M_{\boldsymbol\gamma}^{\mathbf{n}}(\mathcal{D}_I) \subseteq \prod_{i=1}^\ell \mathbb{F}_q^{m \times n_i} $, then $ \{ \mathcal{C}_I \}_{I \subseteq [t]} $ forms a lattice of $ \mathbb{F}_q $-linear MSRD codes as in Subsection \ref{subsec construction 3}, where $ {\rm d}(\mathcal{C}_\varnothing) = d = n-k+1 $ and $ {\rm d}(\mathcal{C}_I) = d - |I| $, for $ I \subseteq [t] $. In Construction \ref{cons msrd}, we set $ B_{i,j} = M_{\boldsymbol\gamma}^{\mathbf{n}}(\gamma_j \mathbf{g}_i) $, for $ i \in [t] $ and $ j \in [m] $, and the condition $ tm + \dim_{\mathbb{F}_q}(\mathcal{C}_\varnothing) \leq mn $ is satisfied. Note also that we may take $ s = 1 $ since $ m_1 = \ldots = m_\ell = m $ and $ d-t \geq 1 $.

When $ t = 2 $, one way of constructing the vectors $ \mathbf{g}_1, \mathbf{g}_2, \ldots , \mathbf{g}_{t+k} \in \mathbb{F}_{q^m}^n $ is as follows. Consider
$$ \left( \begin{array}{c}
\mathbf{g}_1 \\
\hline
\mathbf{g}_3 \\
\mathbf{g}_4 \\
\vdots \\
\mathbf{g}_{k+2} \\
\hline
\mathbf{g}_{2}
\end{array} \right) = \left( \begin{array}{lll|c|lll}
\beta_1 & \ldots & \beta_{\mu r} & \ldots & \beta_1 & \ldots & \beta_{\mu r} \\
 \beta_1^q a_1 & \ldots & \beta_{\mu r}^q a_1 & \ldots & \beta_1^q a_{q-1} & \ldots & \beta_{\mu r}^q a_{q-1} \\
\beta_1^{q^2} a_1^{\frac{q^2-1}{q-1}} & \ldots & \beta_{\mu r}^{q^2} a_1^{\frac{q^2-1}{q-1}} & \ldots &  \beta_1^{q^2} a_{q-1}^{\frac{q^2-1}{q-1}} & \ldots &  \beta_{\mu r}^{q^2} a_{q-1}^{\frac{q^2-1}{q-1}} \\
\vdots & \ddots & \vdots & \ddots & \vdots & \ddots & \vdots \\
 \beta_1^{q^{k}} a_1^{\frac{q^{k}-1}{q-1}} & \ldots &  \beta_{\mu r}^{q^{k}} a_1^{\frac{q^{k}-1}{q-1}} & \ldots &  \beta_1^{q^{k}} a_{q-1}^{\frac{q^{k}-1}{q-1}} & \ldots &  \beta_{\mu r}^{q^{k}} a_{q-1}^{\frac{q^{k}-1}{q-1}} \\
 \beta_1^{q^{k+1}} a_1^{\frac{q^{k+1}-1}{q-1}} & \ldots &  \beta_{\mu r}^{q^{k+1}} a_1^{\frac{q^{k+1}-1}{q-1}} & \ldots &  \beta_1^{q^{k+1}} a_{q-1}^{\frac{q^{k+1}-1}{q-1}} & \ldots &  \beta_{\mu r}^{q^{k+1}} a_{q-1}^{\frac{q^{k+1}-1}{q-1}} \\
\end{array} \right) , $$
where $ \ell = \mu (q-1) $, $ n = \ell r $, and $ a_1, \ldots, a_{q-1}, \beta_1, \ldots, \beta_{\mu r} \in \mathbb{F}_{q^m}^* $ satisfy the properties stated after equation (\ref{eq general msrd generator}).
%
With these assumptions, $ \mathbf{g}_1, \mathbf{g}_2, \mathbf{g}_3 \ldots, \mathbf{g}_{k+2} \in \mathbb{F}_{q^m}^n $ are $ \mathbb{F}_{q^m} $-linearly independent and the $ \mathbb{F}_{q^m} $-linear codes $ \mathcal{D}_I = \langle \mathbf{g}_i : i \in I \rangle_{\mathbb{F}_{q^m}} \oplus \langle \mathbf{g}_{3}, \ldots, \mathbf{g}_{k+2} \rangle_{\mathbb{F}_{q^m}} \subseteq \mathbb{F}_{q^m}^n $, for $ I \subseteq \{1,2 \} $, are MSRD by \cite[Th. 3.12]{generalMSRD} and \cite[Lemma 5]{doubly}.
%

In \cite[Cor. 8]{doubly}, it was shown how to extend these MSRD codes by adding $ t=2 $ rank blocks each formed by matrices of sizes $ 1 \times m $ (i.e., adding a Hamming-metric block $ \mathbb{F}_{q^m}^2 $). With Construction \ref{cons msrd}, we may extend them to obtain an $ \mathbb{F}_q $-linear MSRD code $ \mathcal{C} \subseteq \prod_{i=1}^{\ell + \ell_2} \mathbb{F}_q^{m_i \times n_i} $ with $ {\rm d}(\mathcal{C}) = d $ by adding $ t=2 $ sets of blocks of any sizes $ m_{\ell + 1} \times n_{\ell + 1}, \ldots, m_{\ell + \ell_2} \times n_{\ell + \ell_2} $, with the only restrictions
\begin{equation*}
\begin{split}
m_{\ell + 1} \times n_{\ell + 1} + \cdots + m_{\ell + \ell_1} \times n_{\ell + \ell_1} \leq 1 \times m, & \\
m_{\ell + \ell_1 + 1} \times n_{\ell + \ell_1 + 1} + \cdots + m_{\ell + \ell_2} \times n_{\ell + \ell_2} \leq 1 \times m, &
\end{split}
\end{equation*}
where $ 0 < \ell_1 < \ell_2 $, hence achieving more flexibility in how we may extend such MSRD codes. In particular, the extension may be obtained by adding a block with a sum-rank metric that is not the Hamming metric, in contrast with \cite{doubly}. This is the first known extension of the MSRD codes from \cite{generalMSRD} by adding rank blocks of matrices of sizes different than $ 1 \times m $.

In \cite[Sec. 7]{doubly}, the MSRD extension as above adding a Hamming-metric block $ \mathbb{F}_{q^m}^2 $ was shown to be a one-weight code in some cases (that is, a code whose nonzero codewords all have the same sum-rank weight). The same result holds for the general code $ \mathcal{C} $ as above. The following proposition is straightforward by \cite[Prop. 13]{doubly}. 

\begin{proposition}
Let $ \mathcal{C} \subseteq \prod_{i=1}^{\ell + \ell_2} \mathbb{F}_q^{m_i \times n_i} $ be as above and assume that $ \dim_{\mathbb{F}_q}(\mathcal{C}) = 2m $. Then $ \mathcal{C} $ is a one-weight code if, and only if, $ \ell_1 = 1 $, $ \ell_2 = 2 $ and $ \bigcup_{i=1}^\mu \mathcal{H}_i = \mathbb{F}_{q^m} $, where $ \mathcal{H}_1, \ldots, \mathcal{H}_\mu $ are as in Subsection \ref{subsec preliminaries msrd}.
\end{proposition}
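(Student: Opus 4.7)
The plan is to reduce the claim to \cite[Prop. 13]{doubly}, which handles the special case where the added blocks have Hamming shape $1 \times m$. First I would parametrize a non-zero codeword $C \in \mathcal{C}$ as in the proof of Theorem \ref{th extension is msrd}, namely
\begin{equation*}
C = \left(D + \sum_{i \in I} \sum_{j=\ell_{i-1}+1}^{\ell_i} \sum_{k=1}^m \alpha_{j,k} B_{i,k},\ \varphi_1(\boldsymbol\alpha_1),\ \ldots,\ \varphi_{\ell_2}(\boldsymbol\alpha_{\ell_2})\right),
\end{equation*}
and decompose its sum-rank weight as the weight on the first $\ell$ blocks plus $\sum_{j=1}^{\ell_2} \mathrm{Rk}(\varphi_j(\boldsymbol\alpha_j))$. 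The proof of Theorem \ref{th extension is msrd} already supplies the lower bounds $d - |I|$ and $|I|$ for these two contributions, so $\mathcal{C}$ is a one-weight code if and only if both bounds are attained with equality for every non-zero codeword.

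For the "if" direction, under $\ell_1 = 1$, $\ell_2 = 2$, and $\bigcup_{i=1}^\mu \mathcal{H}_i = \mathbb{F}_{q^m}$, each group of added blocks contains a single block, so the added contribution reduces to $\mathrm{Rk}(\varphi_j(\boldsymbol\alpha_j))$ for exactly one $j$ per $i \in I$. The dimension constraint $\dim_{\mathbb{F}_q}(\mathcal{C}) = 2m$ then pins down the shape of each added block and lets $\varphi_j$ identify $\boldsymbol\alpha_j \in \mathbb{F}_q^m$ with the vector appearing in the Hamming block of \cite[Prop. 13]{doubly}; applying that proposition yields the one-weight property directly. Conversely, for the "only if" direction I would contrapose: if $\ell_i - \ell_{i-1} > 1$ for some $i$, I would select two non-zero $\boldsymbol\alpha_j$ supported in the same group to force $\sum_j \mathrm{Rk}(\varphi_j(\boldsymbol\alpha_j)) > |I|$, producing a codeword of weight strictly greater than $d$; and if instead $\ell_1 = 1$, $\ell_2 = 2$ but $\bigcup_{i=1}^\mu \mathcal{H}_i \subsetneq \mathbb{F}_{q^m}$, then \cite[Prop. 13]{doubly} itself already exhibits two codewords of different weights on the first-$\ell$-block component, which carries over to $\mathcal{C}$.

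The main obstacle is controlling $\mathrm{Rk}(\varphi_j(\boldsymbol\alpha_j))$ when the added block shape differs from $1 \times m$, since a priori this rank can exceed $1$ and the constancy of the weight might fail independently of the condition on the $\mathcal{H}_i$. The resolution is that the one-weight hypothesis, combined with the dimension constraint $\dim_{\mathbb{F}_q}(\mathcal{C}) = 2m$, forces every non-zero $\varphi_j(\boldsymbol\alpha_j)$ to have rank exactly $1$, which restricts the added blocks to essentially be of Hamming shape and reduces the problem verbatim to the setting of \cite[Prop. 13]{doubly}.
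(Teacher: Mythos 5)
Your overall strategy---splitting the sum-rank weight of a codeword into its contribution on the first $\ell$ blocks plus $\sum_j {\rm Rk}(\varphi_j(\boldsymbol\alpha_j))$, and reducing to \cite[Prop. 13]{doubly}---is exactly what the paper intends (the paper gives no argument beyond declaring the statement ``straightforward by \cite[Prop. 13]{doubly}''), and your ``only if'' direction is sound: a group containing two blocks yields a codeword whose added part has weight at least $|I|+1$ and hence total weight at least $d+1$, while the minimum weight $d$ is attained because $\mathcal{C}$ is MSRD; and when $\bigcup_{i=1}^\mu \mathcal{H}_i \subsetneq \mathbb{F}_{q^m}$ the non-constancy already occurs on the first $\ell$ blocks and survives the extension.

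The gap is in the ``if'' direction. The hypothesis $\dim_{\mathbb{F}_q}(\mathcal{C}) = 2m$ does \emph{not} pin down the shapes of the two added blocks: together with $\ell_1 = 1$, $\ell_2 = 2$ it only yields (and only after also ruling out $k \geq 1$, which you do not address) $m_{\ell+1}n_{\ell+1} = m_{\ell+2}n_{\ell+2} = m$, and this is compatible with, say, a $2 \times 2$ block when $m = 4$. In that case $\varphi_j : \mathcal{V}_j \to \mathbb{F}_q^{2 \times 2}$ is surjective, so some nonzero $\boldsymbol\alpha_j$ satisfies $ {\rm Rk}(\varphi_j(\boldsymbol\alpha_j)) = 2 $, and the corresponding codeword has weight at least $(d-1)+2 = d+1$ while the minimum weight $d$ is attained; the code is then \emph{not} one-weight even though all three stated conditions hold. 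Your proposed resolution---that the one-weight hypothesis forces every nonzero $\varphi_j(\boldsymbol\alpha_j)$ to have rank $1$---is circular at this point, since in the ``if'' direction one-weight is the conclusion rather than a hypothesis. What the reduction to \cite[Prop. 13]{doubly} actually requires is that every nonzero matrix in each added block have rank $1$, i.e., $\min(m_{\ell+j}, n_{\ell+j}) = 1$ (hence $n_{\ell+j} = 1$ given $n_i \leq m_i$), together with $k = 0$; neither follows from the stated hypotheses. This is arguably an imprecision in the proposition itself, but your write-up asserts the problematic implication explicitly, so you should either add the thinness condition as a hypothesis or restrict your claim to the ``only if'' direction.
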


A family of lattices of MSRD codes for $ t = 3 $ can be obtained as follows, although only for $ k = 0 $ (i.e., $ \mathcal{D}_\varnothing = 0 $), $ m $ odd and $ q $ even. Consider
\begin{equation}
\left( \begin{array}{c}
\mathbf{g}_1 \\
\mathbf{g}_2 \\
\mathbf{g}_3 \\
\end{array} \right) = \left( \begin{array}{lll|c|lll}
\beta_1 & \ldots & \beta_{\mu r} & \ldots & \beta_1 & \ldots & \beta_{\mu r} \\
 \beta_1^q a_1 & \ldots & \beta_{\mu r}^q a_1 & \ldots & \beta_1^q a_{q-1} & \ldots & \beta_{\mu r}^q a_{q-1} \\
\beta_1^{q^2} a_1^{q+1} & \ldots & \beta_{\mu r}^{q^2} a_1^{q+1} & \ldots &  \beta_1^{q^2} a_{q-1}^{q+1} & \ldots &  \beta_{\mu r}^{q^2} a_{q-1}^{q+1} \\
\end{array} \right) ,
\label{eq lattice from LRS 2}
\end{equation}
where $ \ell = \mu (q-1) $, $ n = \ell r $, and $ a_1, \ldots, a_{q-1}, \beta_1, \ldots, \beta_{\mu r} \in \mathbb{F}_{q^m}^* $ satisfy the properties stated after equation (\ref{eq general msrd generator}). If we further assume that $ m $ is odd and $ q $ is even, then it was shown in the proof of \cite[Th. 5]{doubly} that $ \mathbf{g}_1, \mathbf{g}_2, \mathbf{g}_3 \in \mathbb{F}_{q^m}^n $ are $ \mathbb{F}_{q^m} $-linearly independent and $ \mathcal{D}_I = \langle \mathbf{g}_i : i \in I \rangle_{\mathbb{F}_{q^m}} \subseteq \mathbb{F}_{q^m}^n $, for $ I \subseteq \{ 1,2,3 \} $, are MSRD. Notice that in this case $ \mathcal{D}_\varnothing = 0 $, $ d = n+1 $ and $ {\rm d}(\mathcal{D}_I) = d - |I| = n+1-|I| $, for $ I \subseteq \{ 1,2,3 \} $.

In \cite[Th. 3]{doubly}, it was shown how to extend these MSRD codes by adding $ t=3 $ rank blocks each formed by matrices of sizes $ 1 \times m $ (i.e., adding a Hamming-metric block $ \mathbb{F}_{q^m}^3 $). With Construction \ref{cons msrd}, we may extend them to obtain an $ \mathbb{F}_q $-linear MSRD code $ \mathcal{C} \subseteq \prod_{i=1}^{\ell + \ell_3} \mathbb{F}_q^{m_i \times n_i} $ with $ {\rm d}(\mathcal{C}) = d $ by adding $ t=3 $ sets of blocks of any sizes $ m_{\ell + 1} \times n_{\ell + 1}, \ldots, m_{\ell + \ell_3} \times n_{\ell + \ell_3} $, with the only restrictions
\begin{equation*}
\begin{split}
m_{\ell + 1} \times n_{\ell + 1} + \cdots + m_{\ell + \ell_1} \times n_{\ell + \ell_1} \leq 1 \times m, & \\
m_{\ell + \ell_1 + 1} \times n_{\ell + \ell_1 + 1} + \cdots + m_{\ell + \ell_2} \times n_{\ell + \ell_2} \leq 1 \times m, & \\
m_{\ell + \ell_2 + 1} \times n_{\ell + \ell_2 + 1} + \cdots + m_{\ell + \ell_3} \times n_{\ell + \ell_3} \leq 1 \times m, &
\end{split}
\end{equation*}
where $ 0 < \ell_1 < \ell_2 < \ell_3 $, hence achieving more flexibility in how we may extend such MSRD codes, as in the case $ t = 2 $ shown earlier.

%
%
%
%

\section{Construction 4: Using systematic MSRD codes} \label{sec systematic}

In this section, we provide a construction of $ \mathbb{F}_q $-linear MSRD codes based on systematic generator matrices of $ \mathbb{F}_{q^m} $-linear MSRD codes in $ \mathbb{F}_{q^m}^n $. We describe the general construction in Subsection \ref{subsec construction 4} and provide concrete examples in Subsections \ref{subsec examples 4 phi} and \ref{subsec examples 4 systematic}.

\subsection{The general construction} \label{subsec construction 4}

Consider the parameters $ m = m_1 = \ldots = m_\ell $ and $ n_i \leq m $, for $ i \in [\ell] $. Let also $ t \in [m] $, define $ n = n_1 + \cdots + n_\ell $ and let $ \mathcal{D}_0 \subseteq \mathbb{F}_{q^m}^{n+t} $ be an $ \mathbb{F}_{q^m} $-linear MSRD code of distance $ {\rm d}(\mathcal{D}_0) = d-t \geq 1 $, for some $ d \in [t+1,t+n] $, for the sum-rank length partition $ (n_1, \ldots, n_\ell, t) $. Hence $ \dim_{\mathbb{F}_{q^m}}(\mathcal{D}_0) = n-d+1+2t $. We will set $ k = n+t-d+1 $. Consider a generator matrix of $ \mathcal{D}_0 $ of the form
\begin{equation}
G_0 = \left( \begin{array}{c|cccc}
\mathbf{g}_1 & 1 & 0 & \ldots & 0 \\
\mathbf{g}_2 & 0 & 1 & \ldots & 0 \\
\vdots & \vdots & \vdots & \ddots & \vdots \\
\mathbf{g}_t & 0 & 0 & \ldots & 1 \\
\hline
\mathbf{g}_{t+1} & 0 & 0 & \ldots & 0 \\
\vdots & \vdots & \vdots & \ddots & \vdots \\
\mathbf{g}_{t+k} & 0 & 0 & \ldots & 0
\end{array} \right) \in \mathbb{F}_{q^m}^{(t+k) \times (n+t)},
\label{eq systematic gen matrix}
\end{equation}
%
%
where $ \mathbf{g}_1, \ldots, \mathbf{g}_{t+k} \in \mathbb{F}_{q^m}^n $. Such a generator matrix exists by Gaussian elimination and the fact that the last $ \dim_{\mathbb{F}_{q^m}}(\mathcal{D}_0) \geq t $ positions form an information set of $ \mathcal{D}_0 $ since it is MSRD, thus MDS (see \cite[Ch. 1]{fnt}). Notice that $ G_0 $ is only a systematic generator matrix if $ k = 0 $. However, we will still call it systematic for simplicity.

Assume that there is an $ \mathbb{F}_q $-linear subspace $ \mathcal{V} \subseteq \mathbb{F}_{q^m}^t $ and a vector space isomorphism 
\begin{equation}
\phi : \mathcal{V} \longrightarrow \prod_{i=\ell+1}^{\ell+u} \mathbb{F}_q^{m_i \times n_i},
\label{eq map phi}
\end{equation}
for positive integers $ u $, $ m \geq m_{\ell + 1} \geq \ldots \geq m_{\ell + u} $ and $ n_i \leq m_i $, for $ i \in [\ell +1 , \ell + u] $, such that 
\begin{equation}
{\rm wt}(\phi(\boldsymbol\lambda)) \geq {\rm wt}(\boldsymbol\lambda),
\label{eq property of isomorphism}
\end{equation}
for all $ \boldsymbol\lambda \in \mathcal{V} $. We will provide examples of such an isomorphism in Subsection \ref{subsec examples 4 phi}. Notice that a necessary condition for its existence is
$$ tm \geq m_{\ell+1} n_{\ell+1} + \cdots + m_{\ell +u}n_{\ell + u}. $$
The main construction of this section is as follows.

\begin{construction} \label{cons msrd systematic}
Fix an ordered basis $ \boldsymbol\gamma \in \mathbb{F}_{q^m}^m $ of $ \mathbb{F}_{q^m} $ over $ \mathbb{F}_q $, set $ \mathbf{n} = (n_1, \ldots, n_\ell) $ and define
$$ \mathcal{C} = \left\lbrace \left( M_{\boldsymbol\gamma}^\mathbf{n} \left( \sum_{i=1}^{t+k} \lambda_i \mathbf{g}_i \right) , \phi(\lambda_1, \ldots, \lambda_t) \right) : (\lambda_1, \ldots, \lambda_t) \in \mathcal{V}, \lambda_{t+1} , \ldots, \lambda_{t+k} \in \mathbb{F}_{q^m} \right\rbrace \subseteq \prod_{i=1}^{\ell + u} \mathbb{F}_q^{m_i \times n_i}. $$
\end{construction}

We next show that the code $ \mathcal{C} $ is an $ \mathbb{F}_q $-linear MSRD code of minimum distance $ d $.

\begin{theorem} \label{th extension is msrd systematic}
The code $ \mathcal{C} $ from Construction \ref{cons msrd systematic} is an $ \mathbb{F}_q $-linear MSRD code of minimum sum-rank distance $ {\rm d}(\mathcal{C}) = d $ and dimension $ \dim_{\mathbb{F}_q}(\mathcal{C}) = m (n-d+1) + \sum_{i=\ell+1}^{\ell + u} m_in_i $.
\end{theorem}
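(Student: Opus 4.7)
The plan is to follow the three-step template used for Theorem~\ref{th extension is msrd}: verify $\mathbb{F}_q$-linearity, compute the $\mathbb{F}_q$-dimension via an explicit parameterization, and finally bound the minimum sum-rank distance from below by $d$, with the matching dimension forcing equality and MSRD-ness through the Singleton bound~(\ref{eq singleton bound}).

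$\mathbb{F}_q$-linearity of $\mathcal{C}$ is immediate, since $\mathcal{V}$ and $\mathbb{F}_{q^m}^k$ are $\mathbb{F}_q$-linear and the defining map of $\mathcal{C}$ is $\mathbb{F}_q$-linear in the parameters $(\lambda_1,\ldots,\lambda_{t+k})$. For the dimension I would introduce the $\mathbb{F}_q$-linear surjection
\[
\Psi \colon \mathcal{V} \oplus \mathbb{F}_{q^m}^k \longrightarrow \mathcal{C}, \quad \bigl((\lambda_1,\ldots,\lambda_t),(\lambda_{t+1},\ldots,\lambda_{t+k})\bigr) \longmapsto \Bigl(M_{\boldsymbol\gamma}^{\mathbf{n}}\bigl(\textstyle\sum_i \lambda_i \mathbf{g}_i\bigr),\phi(\lambda_1,\ldots,\lambda_t)\Bigr),
\]
and verify injectivity: vanishing of the last $u$ rank blocks forces $\phi(\lambda_1,\ldots,\lambda_t)=0$, hence $(\lambda_1,\ldots,\lambda_t)=0$ by bijectivity of $\phi$; the first $\ell$ blocks then give $\sum_{i=t+1}^{t+k}\lambda_i\mathbf{g}_i=0$, and the $\mathbb{F}_{q^m}$-linear independence of $\mathbf{g}_{t+1},\ldots,\mathbf{g}_{t+k}$ (deduced from the full row-rank of $G_0$ and the zero pattern in its last $k$ rows) forces the remaining $\lambda_i$ to vanish.

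For the distance, the key device is to associate to each codeword $c \in \mathcal{C}$ the lifted codeword
\[
\tilde c \;=\; \Bigl(\sum_{i=1}^{t+k}\lambda_i \mathbf{g}_i,\; \lambda_1,\ldots,\lambda_t\Bigr) \;\in\; \mathcal{D}_0,
\]
and to compare sum-rank weights with respect to the partition $(n_1,\ldots,n_\ell,t)$ on $\mathcal{D}_0$. The first $\ell$ rank blocks of $c$ are $M_{\boldsymbol\gamma}^{\mathbf{n}}$ of the first $n$ entries of $\tilde c$ and therefore contribute the same weight on both sides, while property~(\ref{eq property of isomorphism}) gives $\mathrm{wt}(\phi(\lambda_1,\ldots,\lambda_t)) \geq \mathrm{wt}(\lambda_1,\ldots,\lambda_t)$ on the new $u$ blocks; hence $\mathrm{wt}(c) \geq \mathrm{wt}(\tilde c)$. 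If $c\neq 0$, then by the injectivity argument $\tilde c\neq 0$ as well, and the MSRD property of $\mathcal{D}_0$ yields $\mathrm{wt}(\tilde c)\geq d-t$. To upgrade this to $\mathrm{wt}(c) \geq d$ I would carry out a case analysis on whether $(\lambda_1,\ldots,\lambda_t)$ vanishes: in the nonzero case, the extra slack built into the construction of $\phi$ is meant to supply the missing $t$ units; in the vanishing case the codeword $c$ reduces to a codeword of the subcode of $\mathcal{D}_0$ spanned by $\mathbf{g}_{t+1},\ldots,\mathbf{g}_{t+k}$ with the last $t$ positions deleted, for which the standard MSRD-shortening argument gives the required distance.

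The main obstacle will be precisely this bridging of the $t$-unit distance gap between $\mathrm{d}(\mathcal{D}_0)=d-t$ and the target $\mathrm{d}(\mathcal{C})=d$. This is the analog of the delicate step in the proof of Theorem~\ref{th extension is msrd}, where the auxiliary blocks $\varphi_j(\boldsymbol\alpha_j)$ compensated exactly for the drop $d\to d-|I|$ in the distances of the lattice codes $\mathcal{C}_I$; here the systematic structure of $G_0$ and property~(\ref{eq property of isomorphism}) must together play the same role, and the Singleton bound~(\ref{eq singleton bound}) applied to the computed dimension then closes the argument.
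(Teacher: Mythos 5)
Your skeleton is the paper's: the same decomposition (the paper writes $\mathcal{C}=\mathcal{C}_1\oplus\mathcal{C}_2$ with $\mathcal{C}_1$ spanned by the matricized $\mathbf{g}_{t+1},\ldots,\mathbf{g}_{t+k}$ padded with zeros, and $\mathcal{C}_2$ built from the first $t$ rows together with $\phi$; your injectivity check for $\Psi$ is the same computation), and the same lifting device $c\mapsto\tilde c\in\mathcal{D}_0$ with $\mathrm{wt}(c)\geq\mathrm{wt}(\tilde c)$ via (\ref{eq property of isomorphism}). The genuine gap is exactly the step you flag yourself: the ``$t$-unit bridge'' you hope for does not exist, and neither of your two devices can supply it. Property (\ref{eq property of isomorphism}) has no extra slack: it only guarantees that the new $u$ blocks carry at least the weight $\mathrm{wt}(\boldsymbol\lambda)=\mathrm{Rk}\bigl(M^t_{\boldsymbol\gamma}(\boldsymbol\lambda)\bigr)$ that the last rank block of $\tilde c$ \emph{already contributes} to $\mathrm{wt}(\tilde c)$; this quantity is at most $t$ and can equal $1$, so the nonzero-$\boldsymbol\lambda$ case gains nothing beyond $\mathrm{wt}(\tilde c)$. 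In the vanishing case, shortening $\mathcal{D}_0$ at the last $t$ positions yields an MSRD code of length $n$ and dimension $k$, whose minimum distance is exactly $n-k+1=d-t$, again not $d$. Indeed no correct argument can close the gap under the construction's literal parameters: your own parameterization gives $\dim_{\mathbb{F}_q}(\mathcal{C})=mk+\sum_{i=\ell+1}^{\ell+u}m_in_i=m(n+t-d+1)+\sum_{i=\ell+1}^{\ell+u}m_in_i$ with $k=n+t-d+1$ as set in Subsection \ref{subsec construction 4}, and this exceeds the Singleton bound (\ref{eq singleton bound}) at distance $d$, namely $m(n-d+1)+\sum_{i=\ell+1}^{\ell+u}m_in_i$, by $mt$; so $\mathrm{d}(\mathcal{C})=d$ is provably false in that normalization. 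Had you evaluated your dimension formula numerically against the theorem's claim, this mismatch would have surfaced immediately.

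What you ran into is an off-by-$t$ normalization slip in the paper, not a subtlety your argument missed. The paper's own proof contains no case analysis and no bridging: it concludes directly $\mathrm{wt}(C)\geq\mathrm{wt}(\mathbf{c})\geq\mathrm{d}(\mathcal{D}_0)=d$ and asserts $\dim_{\mathbb{F}_q}(\mathcal{C}_1)=m(n-d+1)$, i.e., it implicitly works with $\mathrm{d}(\mathcal{D}_0)=d$ and $k=n-d+1$, whereas the text of Construction \ref{cons msrd systematic} sets $\mathrm{d}(\mathcal{D}_0)=d-t$ and $k=n+t-d+1$. The two are reconciled by substituting $d\mapsto d-t$ (equivalently, reading the theorem's $d$ as $\mathrm{d}(\mathcal{D}_0)$ itself): with the construction's literal parameters the correct statement is $\mathrm{d}(\mathcal{C})=d-t$ and $\dim_{\mathbb{F}_q}(\mathcal{C})=m(n+t-d+1)+\sum_{i=\ell+1}^{\ell+u}m_in_i$, which meets the Singleton bound at distance $d-t$ and is therefore MSRD. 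The conceptual content is that the construction preserves the distance of $\mathcal{D}_0$ while trading its $m\times t$ rank block for the $u$ new blocks; it does not increase the distance by $t$. Once you fix the normalization, your chain $\mathrm{wt}(c)\geq\mathrm{wt}(\tilde c)\geq\mathrm{d}(\mathcal{D}_0)$, together with your dimension count and the Singleton bound forcing equality, completes the proof with no further work --- drop the attempted case analysis, which as written is wrong on both branches.
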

\begin{proof}
Similarly to Construction \ref{cons msrd} and Theorem \ref{th extension is msrd}, we may write the code as the direct sum $ \mathcal{C} = \mathcal{C}_1 \oplus \mathcal{C}_2 $, where 
$$ \mathcal{C}_1 = M_{\boldsymbol\gamma}^\mathbf{n} \left( \langle \mathbf{g}_{t+1} ,\ldots, \mathbf{g}_{t+k} \rangle_{\mathbb{F}_{q^m}} \right) \times 0 , $$
where $ 0 $ is the zero subspace in $ \prod_{i=\ell+1}^{\ell + u} \mathbb{F}_q^{m_i \times n_i} $, and
$$ \mathcal{C}_2 = \left\lbrace \left( M_{\boldsymbol\gamma}^\mathbf{n} \left( \sum_{i=1}^t \lambda_i \mathbf{g}_i \right) , \phi(\lambda_1, \ldots, \lambda_t) \right) : (\lambda_1, \ldots, \lambda_t) \in \mathcal{V} \right\rbrace . $$
It holds that $ \mathcal{C}_1 \cap \mathcal{C}_2 = 0 $, since any nonzero codeword in $ \mathcal{C}_2 $ has a nonzero component in at least one of the last $ u $ rank blocks, whereas $ \mathcal{C}_1 $ is identically zero in such positions. Thus $ \mathcal{C} = \mathcal{C}_1 \oplus \mathcal{C}_2 $. Next, the claim on the dimension of $ \mathcal{C} $ follows from the fact that $ \dim_{\mathbb{F}_q}(\mathcal{C}_1) = m(n-d+1) $ and
$$ \dim_{\mathbb{F}_q}(\mathcal{C}_2) = \dim_{\mathbb{F}_q}(\mathcal{V}) = \sum_{i=\ell+1}^{\ell + u} m_in_i, $$
since $ \phi $ is a vector space isomorphism.

Now let 
$$ C = \left( M_{\boldsymbol\gamma}^\mathbf{n} \left( \sum_{i=1}^{t+k} \lambda_i \mathbf{g}_i \right) , \phi(\boldsymbol\lambda) \right) \in \mathcal{C} \setminus 0, $$
for $ \lambda_1, \ldots, \lambda_{t+k} \in \mathbb{F}_{q^m} $, where $ \boldsymbol\lambda = (\lambda_1, \ldots, \lambda_t) \in \mathcal{V} $. We have that
$$ \mathbf{c} = \left( \sum_{i=1}^{t+k} \lambda_i \mathbf{g}_i, \boldsymbol\lambda \right) \in \mathcal{D}_0, $$
which is nonzero since $ C $ is nonzero. Finally, we have that
$$ {\rm wt}(C) = {\rm wt}\left( M_{\boldsymbol\gamma}^\mathbf{n} \left( \sum_{i=1}^{t+k} \lambda_i \mathbf{g}_i \right) \right) + {\rm wt} \left( \phi(\boldsymbol\lambda) \right) $$
$$ \geq {\rm wt}\left( \sum_{i=1}^{t+k} \lambda_i \mathbf{g}_i \right) + {\rm wt}(\boldsymbol\lambda) = {\rm wt}(\mathbf{c}) \geq {\rm d}(\mathcal{D}_0) = d, $$
where the first inequality holds by (\ref{eq property of isomorphism}). Therefore, $ {\rm d}(\mathcal{C}) \geq d $, and by the Singleton bound (\ref{eq singleton bound}), equality must hold.
\end{proof}

\subsection{Concrete examples for the isomorphism $ \phi $} \label{subsec examples 4 phi}

We start with a construction of the map $ \phi $ from (\ref{eq map phi}), i.e., a construction of an $ \mathbb{F}_q $-linear subspace $ \mathcal{V} \subseteq \mathbb{F}_{q^m}^t $ and a vector space isomorphism $ \phi : \mathcal{V} \longrightarrow \prod_{i= \ell+1}^{\ell + u} \mathbb{F}_q^{m_i \times n_i} $ such that $ {\rm wt}(\phi(\boldsymbol\lambda)) \geq {\rm wt}(\boldsymbol\lambda) $, for all $ \boldsymbol\lambda \in \mathbb{F}_{q^m}^t $. The idea will be to partition matrices into disjoint submatrices.

\begin{definition}
Given $ X \subseteq [m] $ and $ Y \subseteq [t] $, define $ \pi_{X,Y} : \mathbb{F}_q^{m \times t} \longrightarrow \mathbb{F}_q^{|X| \times |Y|} $ as the map such that $ \pi_{X,Y}(C) $ is the submatrix of $ C \in \mathbb{F}_q^{m \times t} $ formed by its entries in the positions $ (i,j) \in X \times Y $.
\end{definition}

\begin{definition} \label{def pi}
Consider $ X_1, \ldots , X_u \subseteq [m] $ and $ Y_1, \ldots, Y_u \subseteq [t] $ such that $ (X_i \times Y_i) \cap (X_j \times Y_j) = \varnothing $ if $ i \neq j $. Next, define the surjective $ \mathbb{F}_q $-linear map $ \pi : \mathbb{F}_q^{m \times t} \longrightarrow \prod_{i= \ell+1}^{\ell + u} \mathbb{F}_q^{m_i \times n_i} $ by
$$ \pi (C) = \left( \pi_{X_1,Y_1}(C), \ldots, \pi_{X_u,Y_u}(C) \right), $$
for $ C \in \mathbb{F}_q^{m \times t} $.
\end{definition}

We illustrate this definition with the following example.

\begin{example}
Consider the case $ m = 4 $, $ t = 5 $ and $ u = 5 $, and choose the following partition
$$ \begin{array}{rclcrcl}
X_1 & = & \{ 1,2,3 \}, & \quad & Y_1 & = & \{ 1,2,3 \}, \\
X_2 & = & \{ 4 \}, & \quad & Y_2 & = & \{ 1,2 \}, \\
X_3 & = & \{ 1 \}, & \quad & Y_3 & = & \{ 4,5 \}, \\
X_4 & = & \{ 2,3 \}, & \quad & Y_4 & = & \{ 4,5 \}, \\
X_5 & = & \{ 4 \}, & \quad & Y_5 & = & \{ 3,4,5 \}. 
\end{array} $$
Observe that $ (X_i \times Y_i) \cap (X_j \times Y_j) = \varnothing $ if $ i \neq j $. Now, the map 
$$ \pi : \mathbb{F}_q^{4 \times 5} \longrightarrow \prod_{i= 1}^5 \mathbb{F}_q^{|X_i| \times |Y_i|} $$
from Definition \ref{def pi} essentially consists in partitioning a matrix from $ \mathbb{F}_q^{4 \times 5} $ as follows:
$$ 
\begin{tikzpicture}[decoration=brace]
  \matrix (m)[
    matrix of math nodes,
    left delimiter=(,right delimiter={)},
    nodes in empty cells,
  ] {
    c_{1,1} & c_{1,2}  & c_{1,3} & c_{1,4}  & c_{1,5} \\
    c_{2,1} & c_{2,2}  & c_{2,3} & c_{2,4}  & c_{2,5} \\
    c_{3,1} & c_{3,2}  & c_{3,3} & c_{3,4}  & c_{3,5} \\
    c_{4,1} & c_{4,2}  & c_{4,3} & c_{4,4}  & c_{4,5} \\
  } ;

  \draw (m-3-1.south west) rectangle (m-1-3.north east);
  \draw (m-4-1.south west) rectangle (m-4-2.north east);
  \draw (m-1-4.south west) rectangle (m-1-5.north east);
  \draw (m-3-4.south west) rectangle (m-2-5.north east);
  \draw (m-4-3.south west) rectangle (m-4-5.north east);
\end{tikzpicture}
. $$
In this example, each set $ X_i $ consists of consecutive numbers in $ [m] $, and similarly for the sets $ Y_i $. Furthermore, in this example $ [m] \times [t] = \bigcup_{i=1}^5 X_i \times Y_i $. However, these two properties do not need to hold according to Definition \ref{def pi}.
\end{example}

Let the notation and assumptions be as in Definition \ref{def pi}. By the well-known properties of ranks of matrices and their submatrices, it holds that
\begin{equation}
{\rm Rk}(C) \leq \sum_{i=1}^u {\rm Rk}(\pi_{X_i,Y_i}(C)),
\label{eq inequality ranks for phi}
\end{equation}
for all $ C \in \mathbb{F}_q^{m \times t} $. Therefore, we may define the map $ \phi $ and the subspace $ \mathcal{V} $ as follows.

\begin{definition} \label{def phi}
Consider $ X_1, \ldots , X_u \subseteq [m] $ and $ Y_1, \ldots, Y_u \subseteq [t] $ such that $ (X_i \times Y_i) \cap (X_j \times Y_j) = \varnothing $ if $ i \neq j $. Let $ \boldsymbol\gamma = (\gamma_1, \ldots, \gamma_m) $ be an ordered basis of $ \mathbb{F}_{q^m} $ over $ \mathbb{F}_q $, and set
$$ \mathcal{U} = \left\lbrace (c_{i,j})_{i=1,j=1}^{m,t} \in \mathbb{F}_q^{m \times t} : c_{i,j} = 0, \textrm{ for } (i,j) \in ([m] \times [t]) \setminus \bigcup_{s=1}^u (X_s \times Y_s) \right\rbrace . $$
Finally, define $ \mathcal{V} = (M_{\boldsymbol\gamma}^t)^{-1}( \mathcal{U} ) \subseteq \mathbb{F}_{q^m}^t $ and the map $ \phi : \mathcal{V} \longrightarrow \prod_{i= \ell+1}^{\ell + u} \mathbb{F}_q^{m_i \times n_i} $ given by
$$ \phi (\boldsymbol\lambda) = \pi \left( M_{\boldsymbol\gamma}^t(\boldsymbol\lambda) \right), $$
for $ \boldsymbol\lambda \in \mathcal{V} $, where $ \pi $ is as in Definition \ref{def pi}.
\end{definition}

The following result is straightforward using (\ref{eq inequality ranks for phi}).

\begin{proposition}
The map $ \phi : \mathcal{V} \longrightarrow \prod_{i= \ell+1}^{\ell + u} \mathbb{F}_q^{m_i \times n_i} $ from Definition \ref{def phi} is a vector space isomorphism such that $ {\rm wt}(\phi(\boldsymbol\lambda)) \geq {\rm wt}(\boldsymbol\lambda) $, for all $ \boldsymbol\lambda \in \mathbb{F}_{q^m}^t $. 
\end{proposition}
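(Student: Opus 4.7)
The plan is to decompose $\phi$ as the composition $\pi \circ M_{\boldsymbol\gamma}^t$ (restricted to $\mathcal{V}$), and verify the two claims separately: bijectivity and the weight inequality. The key input for the second part is the rank inequality (\ref{eq inequality ranks for phi}).

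First I would establish that $\phi$ is an $\mathbb{F}_q$-linear isomorphism. Since $M_{\boldsymbol\gamma}^t: \mathbb{F}_{q^m}^t \to \mathbb{F}_q^{m \times t}$ is an $\mathbb{F}_q$-linear isomorphism and $\mathcal{V}$ is by definition the preimage of $\mathcal{U}$, the restriction $M_{\boldsymbol\gamma}^t|_{\mathcal{V}}: \mathcal{V} \to \mathcal{U}$ is an $\mathbb{F}_q$-linear isomorphism. Then I would check that $\pi|_{\mathcal{U}}: \mathcal{U} \to \prod_{i=\ell+1}^{\ell+u} \mathbb{F}_q^{m_i \times n_i}$ is itself an isomorphism: the subspace $\mathcal{U}$ consists precisely of matrices whose nonzero entries lie in $\bigcup_{s=1}^u X_s \times Y_s$, the blocks $X_i \times Y_i$ are pairwise disjoint, and (under the implicit matching $|X_i| = m_{\ell+i}$, $|Y_i| = n_{\ell+i}$) each projection $\pi_{X_i,Y_i}$ independently reads off the entries in its block. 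Hence $\pi|_{\mathcal{U}}$ is a bijection, and the composition $\phi = \pi \circ M_{\boldsymbol\gamma}^t|_{\mathcal{V}}$ is an $\mathbb{F}_q$-linear vector space isomorphism.

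Next I would verify the weight inequality for $\boldsymbol\lambda \in \mathcal{V}$. By the definition of the sum-rank weight on $\mathbb{F}_{q^m}^t$ (with the one-block partition $\mathbf{n} = (t)$), we have ${\rm wt}(\boldsymbol\lambda) = {\rm Rk}(M_{\boldsymbol\gamma}^t(\boldsymbol\lambda))$. Setting $C = M_{\boldsymbol\gamma}^t(\boldsymbol\lambda) \in \mathbb{F}_q^{m \times t}$, the definition of $\phi$ gives
\[
{\rm wt}(\phi(\boldsymbol\lambda)) = \sum_{i=1}^u {\rm Rk}\bigl( \pi_{X_i,Y_i}(C) \bigr).
\]
Applying (\ref{eq inequality ranks for phi}) to $C$ yields ${\rm Rk}(C) \leq \sum_{i=1}^u {\rm Rk}(\pi_{X_i,Y_i}(C))$, so ${\rm wt}(\phi(\boldsymbol\lambda)) \geq {\rm Rk}(C) = {\rm wt}(\boldsymbol\lambda)$, as required.

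The only nontrivial point is (\ref{eq inequality ranks for phi}) itself, which is already invoked as known, so no real obstacle remains. One small subtlety I would flag is the matching $|X_i| = m_{\ell+i}$, $|Y_i| = n_{\ell+i}$, which is needed for the codomain of $\pi$ in Definition \ref{def pi} to literally coincide with $\prod_{i=\ell+1}^{\ell+u} \mathbb{F}_q^{m_i \times n_i}$; this is implicit in the construction and I would mention it briefly for completeness.
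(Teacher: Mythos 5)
Your proof is correct and follows exactly the route the paper intends: the paper gives no written proof, merely declaring the result ``straightforward using (\ref{eq inequality ranks for phi})'', and your argument fills in precisely those details --- the factorization $\phi = \pi \circ M_{\boldsymbol\gamma}^t|_{\mathcal{V}}$, the bijectivity of each factor, and the application of the rank inequality to $C = M_{\boldsymbol\gamma}^t(\boldsymbol\lambda) \in \mathcal{U}$. Your side remark about the matching $|X_i| = m_{\ell+i}$, $|Y_i| = n_{\ell+i}$ is a sensible clarification of what the paper leaves implicit.
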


\subsection{Concrete examples of MSRD codes} \label{subsec examples 4 systematic}

We now provide examples of systematic matrices as in (\ref{eq systematic gen matrix}), and therefore examples of MSRD codes coming from Construction \ref{cons msrd systematic}. We will make use of the $ \mathbb{F}_{q^m} $-linear MSRD codes from Subsection \ref{subsec preliminaries msrd}.

Consider positive integers $ m = m_1 = \ldots = m_\ell $ and $ r = n_1 = \ldots = n_\ell = t \leq m $. Assume also that $ \ell + 1 = \mu (q-1) $ and let $ n = n_1 + \cdots + n_\ell = \ell r $, for some positive integer $ \mu $. Let $ a_1, \ldots, a_{q-1}, \beta_1, \ldots, \beta_{\mu r} \in \mathbb{F}_{q^m}^* $ satisfy the properties stated after equation (\ref{eq general msrd generator}). Set $ k = n+t-d+1 $ for some $ d \in [t+1,t+n] $. We may choose $ \mathcal{D}_0 \subseteq \mathbb{F}_{q^m}^{n+t} $ in Construction \ref{cons msrd systematic} as the $ \mathbb{F}_{q^m} $-linear MSRD code with generator matrix $ M_{t+k}(\mathbf{a},\boldsymbol\beta) \in \mathbb{F}_{q^m}^{(t+k) \times (n+t)} $, given in Subsection \ref{subsec preliminaries msrd}, or the $ \mathbb{F}_{q^m} $-linear MSRD code with parity-check matrix $ M_{n-k}(\mathbf{a},\boldsymbol\beta) \in \mathbb{F}_{q^m}^{(t+k) \times (n+t)} $, for the sum-rank length partition $ (n_1, \ldots, n_\ell, t) = (r, \ldots, r) $ ($ \ell+1 $ times). Observe that $ {\rm d}(\mathcal{D}_0) = d-t \geq 1 $ and $ \dim_{\mathbb{F}_{q^m}}(\mathcal{D}_0) = t+k $. Finally, by Gaussian elimination, we may obtain a generator matrix of $ \mathcal{D}_0 $ as in (\ref{eq systematic gen matrix}), for some $ \mathbf{g}_1, \ldots, \mathbf{g}_{t+k} \in \mathbb{F}_{q^m}^n $. 

The next step is to choose a matrix partition in order to define the vector space isomorphism $ \phi $ as in Subsection \ref{subsec examples 4 phi}. Let $ u $ be a positive integer and choose $ X_1, \ldots , X_u \subseteq [m] $ and $ Y_1, \ldots, Y_u \subseteq [t] $ such that $ (X_i \times Y_i) \cap (X_j \times Y_j) = \varnothing $ if $ i \neq j $. Define the $ \mathbb{F}_q $-linear subspace $ \mathcal{V} \subseteq \mathbb{F}_{q^m}^t $ and the vector space isomorphism $ \phi : \mathcal{V} \longrightarrow \prod_{i= \ell+1}^{\ell + u} \mathbb{F}_q^{m_i \times n_i} $ as in Definition \ref{def phi}.

By Construction \ref{cons msrd systematic}, we obtain an $ \mathbb{F}_q $-linear MSRD code $ \mathcal{C} \subseteq \prod_{i=1}^{\ell + u} \mathbb{F}_q^{m_i \times n_i} $ of minimum sum-rank distance $ {\rm d}(\mathcal{C}) = d \in [t+1,t+n] $ and dimension $ \dim_{\mathbb{F}_q}(\mathcal{C}) = m (n-d+1) + \sum_{i=\ell+1}^{\ell + u} m_in_i $, where 
$$ \ell = \mu (q-1)-1, \quad r = n_1 = \ldots = n_\ell \leq m = m_1 = \ldots = m_\ell, \quad m_{\ell+j} = |X_j| \quad \textrm{and} \quad n_{\ell + j} = |Y_j|, $$
for $ j \in [u] $. The possible values of $ \mu $ and $ r $ in this construction (which come from the code $ \mathcal{D}_0 $ from \cite{generalMSRD}) are described in \cite[Table 1]{generalMSRD}.

As a concrete example, we may choose $ \mu = 1 $ and $ r = m $, corresponding to linearized Reed--Solomon codes \cite{linearizedRS} (first row in \cite[Table 1]{generalMSRD}). In this case, we obtain an $ \mathbb{F}_q $-linear MSRD code in $ \prod_{i=1}^{\ell + u} \mathbb{F}_q^{m_i \times n_i} $, as above, of minimum sum-rank distance $ d \in [t+1,t+n] $, where
$$ \ell = q-2, \quad r = n_1 = \ldots = n_\ell = m_1 = \ldots = m_\ell, \quad m_{\ell+j} = |X_j| \quad \textrm{and} \quad n_{\ell + j} = |Y_j|, $$
for $ j \in [u] $.

\begin{remark} \label{remark const 4 better than 3}
By \cite[Th. 1]{doubly}, the vectors $ \mathbf{g}_1, \ldots, \mathbf{g}_{t+k} \in \mathbb{F}_{q^m}^n $ from the systematic generator matrix in (\ref{eq systematic gen matrix}) are such that the $ \mathbb{F}_{q^m} $-linear codes $ \mathcal{D}_I = \langle \mathbf{g}_i : i \in I \rangle_{\mathbb{F}_{q^m}} \oplus \langle \mathbf{g}_{t+1}, \ldots, \mathbf{g}_{t+k} \rangle_{\mathbb{F}_{q^m}} \subseteq \mathbb{F}_{q^m}^n $, for $ I \subseteq [t] $, are all MSRD with $ \dim_{\mathbb{F}_{q^m}}(\mathcal{D}_I) = k + |I| $. Thus we would be in the scenario of Subsection \ref{subsec examples 3}. However, using Construction \ref{cons msrd} in this case, we may extend such codes by adding any matrix sizes $ m_{\ell + 1} \times n_{\ell + 1} , \ldots, m_{\ell + u} \times n_{\ell + u} $, where
$$ m_{\ell + \ell_{i-1} + 1} n_{\ell + \ell_{i-1} + 1} + \cdots + m_{\ell + \ell_i} n_{\ell + \ell_i} \leq m, $$
for $ i \in [t] $, for integers $ 0 = \ell_0 < \ell_1 < \ell_2 < \ldots < \ell_t = u $. In particular, $ m_{\ell+1}n_{\ell +1 } + \cdots + m_{\ell + u}n_{\ell + u} \leq tm $.

However, the reader may easily verify that, using Construction \ref{cons msrd systematic}, we have more flexibility in the choice of the matrix sizes $ m_{\ell + 1} \times n_{\ell + 1} , \ldots, m_{\ell + u} \times n_{\ell + u} $ to extend the MSRD codes $ \mathcal{D}_I $. For instance, it is still necessary that $ m_{\ell+1}n_{\ell +1 } + \cdots + m_{\ell + u}n_{\ell + u} \leq tm $, but we can easily partition matrices in order to obtain $ m_{\ell + \ell_{i-1} + 1} n_{\ell + \ell_{i-1} + 1} + \cdots + m_{\ell + \ell_i} n_{\ell + \ell_i} > m $ for some $ i \in [t] $, which is not possible with Construction \ref{cons msrd}.

This is due to the fact that we are using a stronger property than \cite[Th. 1]{doubly}, namely, we are using that $ \mathcal{D}_0 $ is MSRD for the sum-rank length partition $ (n_1, \ldots, n_\ell, t) $ for $ t > 1 $.
\end{remark}

\begin{remark} \label{remark const 3 better than 4}
Conversely, it is natural to ask whether we may use Construction \ref{cons msrd systematic} for the doubly and triply extended MSRD codes that we could obtain via \cite[Th. 1]{doubly} from the lattices of MSRD codes in Subsection \ref{subsec examples 3}. However, such doubly and triply MSRD codes using \cite[Th. 1]{doubly} are extended by adding a Hamming-metric block (and extensions by adding a rank-metric block are not possible \cite[Prop. 11]{doubly}). Thus Construction \ref{cons msrd systematic} would not be applicable in this case.
\end{remark}

The previous two remarks show that, due to the concrete examples from Subsections \ref{subsec examples 3} and \ref{subsec examples 4 systematic}, one cannot always use Construction \ref{cons msrd systematic} instead of Construction \ref{cons msrd} and viceversa.

%
%

\section{Comparisons with previous MSRD codes} \label{sec comparisons}

In this section, we briefly compare the concrete examples of MSRD codes that can be obtained via Constructions \ref{cons product}, \ref{cons bases}, \ref{cons msrd} and \ref{cons msrd systematic} with the known MSRD codes in the literature \cite{alberto-fundamental, chen, linearizedRS, generalMSRD, doubly, twisted, neri-oneweight, santonastaso}. For simplicity, we will simply show that the parameters of the MSRD codes in those works can be obtained via Constructions \ref{cons product}, \ref{cons bases}, \ref{cons msrd} and \ref{cons msrd systematic}, whereas our constructions give rise to MSRD codes for strictly larger sets of parameters.

First, as stated at the end of Section \ref{sec cart products}, Construction \ref{cons product} does not cover new parameters, but can be decoded faster than linearized Reed--Solomon codes for the same parameters. 

Second, \cite[Const. VII.3]{alberto-fundamental} can be obtained applying Construction \ref{cons bases} recursively by choosing $ \ell = t = 1 $.

Next, the MSRD codes from \cite{twisted, santonastaso} cover the same parameters as linearized Reed--Solomon codes \cite{linearizedRS}, which in turn are a particular case of the MSRD codes from \cite{generalMSRD}. Now, the codes from \cite{generalMSRD} correspond to those in Subsection \ref{subsec examples 4 systematic} when choosing the trivial matrix partition $ X_1 = [m] $, $ Y_1 = [t] $ and $ u = 1 $ in order to construct the map $ \phi $ from Subsection \ref{subsec examples 4 phi}. Thus it is clear that the concrete MSRD codes from Subsection \ref{subsec examples 4 systematic} (built via Construction \ref{cons msrd systematic}) cover a strictly larger set of parameters.

Doubly extended linearized Reed--Solomon codes \cite{neri-oneweight} are a particular case of the doubly and triply extended MSRD codes from \cite{doubly}. Now, the doubly extended MSRD codes from \cite{doubly} correspond to those in Subsection \ref{subsec examples 3} when choosing $ \ell_1 = 1 $, $ \ell_2 = 2 $, $ m_{\ell+1} = m_{\ell + 2} = m $ and $ n_{\ell + 1} = n_{\ell + 2} = 1 $. Similarly, the triply extended MSRD codes from \cite{doubly} correspond to those in Subsection \ref{subsec examples 3} when choosing $ \ell_1 = 1 $, $ \ell_2 = 2 $, $ \ell_3 = 3 $, $ m_{\ell+1} = m_{\ell + 2} = m_{\ell + 3} = m $ and $ n_{\ell + 1} = n_{\ell + 2} = n_{\ell + 3} = 1 $. Hence it is clear that the concrete MSRD codes from Subsection \ref{subsec examples 3} (built via Construction \ref{cons msrd}) cover a strictly larger set of parameters.

The recent MSRD codes from \cite[Subsec. 5.2]{chen} can be obtained via Construction \ref{cons bases}, where the code $ \mathcal{C}_2 $ is the concrete MSRD code from Subsection \ref{subsec examples 3} choosing $ a_1 = 1 $ and puncturing the blocks corresponding to $ a_2, \ldots, a_{q-1} $ (i.e., choosing the generator matrix of a Gabidulin code \cite{gabidulin}), and restricting added blocks to square matrices, i.e., $ m_{\ell+1} = n_{\ell + 1}, \ldots, m_{\ell + \ell_2} = n_{\ell + \ell_2} $. Notice that the code $ \mathcal{C}_1 $ in Construction \ref{cons bases} needs to be a trivial code of dimension $ m_\ell $ by Theorem \ref{th bases}.

Finally, notice that Construction \ref{cons msrd systematic} cannot be obtained via Construction \ref{cons msrd} by Remark \ref{remark const 4 better than 3}. Similarly, Construction \ref{cons msrd} cannot be obtained via Construction \ref{cons msrd systematic} by Remark \ref{remark const 3 better than 4}. In particular, the concrete MSRD codes in Subsections \ref{subsec examples 3} and \ref{subsec examples 4 systematic} cover different sets of parameters.

\section*{Acknowledgement}

The author gratefully acknowledges the support from a Mar{\'i}a Zambrano contract by the University of Valladolid, Spain (Contract no. E-47-2022-0001486), and the support from MCIN/AEI/ \\10.13039/501100011033 and the European Union NextGenerationEU/PRTR (Grant no. TED2021-130358B-I00).

 

\begin{thebibliography}{10}

\bibitem{alberto-fundamental}
E.~Byrne, H.~Gluesing-Luerssen, and A.~Ravagnani.
\newblock Fundamental properties of sum-rank-metric codes.
\newblock {\em IEEE Trans.\ Info.\ Theory}, 67(10):6456--6475, 2021.

\bibitem{cai-MR}
H.~Cai, Y.~Miao, M.~Schwartz, and X.~Tang.
\newblock A construction of maximally recoverable codes with order-optimal
  field size.
\newblock {\em IEEE Trans.\ Info.\ Theory}, 68(1):204--212, 2022.

\bibitem{chen}
H.~Chen.
\newblock New explicit good linear sum-rank-metric codes.
\newblock {\em IEEE Trans.\ Info.\ Theory}, 69(10):6303--6313, 2023.

\bibitem{gabidulin}
E.~M. Gabidulin.
\newblock Theory of codes with maximum rank distance.
\newblock {\em Prob.\ Info.\ Transmission}, 21(1):1--12, 1985.

\bibitem{gopi-MR}
S.~Gopi and V.~Guruswami.
\newblock Improved maximally recoverable {LRC}s using skew polynomials.
\newblock {\em IEEE Trans.\ Info.\ Theory}, 68(11):7198--7214, 2022.

\bibitem{sum-rank-chapter}
E.~Gorla, U.~Mart{\'\i}nez-Pe{\~n}as, and F.~Salizzoni.
\newblock Sum-rank metric codes.
\newblock 2023.
\newblock {P}reprint: arXiv:2304.12095.

\bibitem{spacetime-kumar}
H.-F. Lu and P.~V. Kumar.
\newblock A unified construction of space--time codes with optimal
  rate-diversity tradeoff.
\newblock {\em IEEE Trans.\ Info.\ Theory}, 51(5):1709--1730, May 2005.

\bibitem{linearizedRS}
U.~Mart{\'i}nez-Pe{\~n}as.
\newblock Skew and linearized {R}eed--{S}olomon codes and maximum sum rank
  distance codes over any division ring.
\newblock {\em J.\ Algebra}, 504:587--612, 2018.

\bibitem{gsrws}
U.~Mart{\'i}nez-Pe{\~{n}}as.
\newblock Theory of supports for linear codes endowed with the sum-rank metric.
\newblock {\em Des., Codes, Crypto.}, 87:2295--2320, 2019.

\bibitem{generalMSRD}
U.~Mart{\'i}nez-Pe{\~n}as.
\newblock A general family of {MSRD} codes and {PMDS} codes with smaller field
  sizes from extended {M}oore matrices.
\newblock {\em SIAM J. Disc. Math.}, 36(3):1868--1886, 2022.

\bibitem{multicover}
U.~Mart{\'i}nez-Pe{\~n}as.
\newblock Multilayer crisscross error and erasure correction.
\newblock 2022.
\newblock {P}reprint: arXiv:2203.07238.

\bibitem{doubly}
U.~Mart{\'i}nez-Pe{\~n}as.
\newblock Doubly and triply extended {MSRD} codes.
\newblock {\em Finite Fields App.}, 91:102272, 2023.

\bibitem{universal-lrc}
U.~Mart{\'i}nez-Pe{\~n}as and F.~R. Kschischang.
\newblock Universal and dynamic locally repairable codes with maximal
  recoverability via sum-rank codes.
\newblock {\em IEEE Trans.\ Info.\ Theory}, 65(12):7790--7805.

\bibitem{secure-multishot}
U.~Mart{\'i}nez-Pe{\~n}as and F.~R. {Kschischang}.
\newblock Reliable and secure multishot network coding using linearized
  {R}eed-{S}olomon codes.
\newblock {\em IEEE Trans.\ Info.\ Theory}, 65(8):4785--4803, 2019.

\bibitem{fnt}
U.~Mart{\'\i}nez-Pe{\~n}as, M.~Shehadeh, and F.~R. Kschischang.
\newblock Codes in the {S}um-{R}ank {M}etric, {F}undamentals and
  {A}pplications.
\newblock {\em Foundations and Trends{\textregistered} in Communications and
  Information Theory}, 19(5):814--1031, 2022.

\bibitem{twisted}
A.~Neri.
\newblock Twisted linearized {R}eed--{S}olomon codes: A skew polynomial
  framework.
\newblock {\em J.\ Algebra}, 609:792--839, 2022.

\bibitem{neri-oneweight}
A.~Neri, P.~Santonastaso, and F.~Zullo.
\newblock The geometry of one-weight codes in the sum-rank metric.
\newblock {\em J. Combinatorial Theory, S. A}, 194:105703, 2023.

\bibitem{multishot}
R.~W. N{\'o}brega and B.~F. Uch{\^o}a-Filho.
\newblock Multishot codes for network coding using rank-metric codes.
\newblock In {\em Proc. Third IEEE Int.\ Workshop Wireless Network Coding},
  pages 1--6, 2010.

\bibitem{santonastaso}
P.~Santonastaso and J.~Sheekey.
\newblock On {MSRD} codes, h-designs and disjoint maximum scattered linear
  sets.
\newblock 2023.
\newblock {P}reprint: arXiv:2308.00378.

\bibitem{Mohannad-Journal}
M.~Shehadeh and F.~R. Kschischang.
\newblock Rate-diversity optimal multiblock space-time codes via sum-rank
  codes.
\newblock In {\em Proc.\ IEEE Int.\ Symp. Info.\ Theory}, pages 3055--3060,
  2020.

\end{thebibliography}

\end{document}